\PassOptionsToPackage{numbers,compress}{natbib}
\documentclass{article}

\usepackage[preprint]{neurips_2026}

\usepackage{microtype}
\usepackage{graphicx}
\usepackage[dvipsnames]{xcolor}

\usepackage{subcaption}
\usepackage{booktabs} 
\usepackage{multirow}
\usepackage{tabularx}
\usepackage{rotating}
\usepackage{wrapfig}

\usepackage{hyperref}
\usepackage{url}

\usepackage{amsmath}
\usepackage{amssymb}
\usepackage{mathtools}
\usepackage{amsthm}

\usepackage[capitalize,noabbrev]{cleveref}

\usepackage{enumitem}
\setlist{itemsep=2pt, topsep=4pt, parsep=0pt}

\usepackage{makecell}

\theoremstyle{plain}
\newtheorem{theorem}{Theorem}[section]
\newtheorem{proposition}[theorem]{Proposition}
\newtheorem{lemma}[theorem]{Lemma}

\theoremstyle{definition}

\newtheorem{assumption}[theorem]{Assumption}
\theoremstyle{remark}
\newtheorem{remark}[theorem]{Remark}

\title{Descent-Guided Policy Gradient for Scalable Cooperative Multi-Agent Learning}

\author{%
    Shan Yang$^{\sharp}$\thanks{Corresponding authors.}, \quad Yang Liu$^{\flat}$\footnotemark[\value{footnote}] \\
    National University of Singapore \\
    \texttt{$^{\sharp}$yang\_shan@u.nus.edu}, \quad \texttt{$^{\flat}$ceelya@nus.edu.sg}
}

\begin{document}

\maketitle

\begin{abstract}
    Scaling cooperative multi-agent reinforcement learning (MARL) is fundamentally limited by cross-agent noise. When agents share a common reward, each agent's learning signal is computed from a shared return that depends on all agents, so the stochasticity of the other agents enters the signal as cross-agent noise that grows with $N$. Fortunately, many engineering systems, such as cloud computing and power systems, have differentiable analytical models that prescribe efficient system states, providing a new reference beyond noisy shared returns. In this work, we propose Descent-Guided Policy Gradient (DG-PG), a framework that augments policy-gradient updates with a noise-free descent signal derived from differentiable analytical models. We prove that DG-PG reduces policy-gradient estimator variance from $\mathcal{O}(N)$ to $\mathcal{O}(1)$, preserves the equilibria of the cooperative game, and achieves agent-independent sample complexity $\widetilde{\mathcal{O}} (1/\epsilon)$. On a heterogeneous cloud resource scheduling task with up to 1500 agents, DG-PG converges within 20 episodes on average, while MAPPO and IPPO fail to converge under identical architectures.
\end{abstract}

\section{Introduction}
\label{sec:introduction}

Multi-agent reinforcement learning (MARL) has demonstrated remarkable success across a wide range of domains, from game playing \citep{yu2022surprising} to cloud resource scheduling \citep{mao2019learning} and transportation management \citep{el2013multiagent}. In these applications, an agent represents an individual decision-making entity, such as a player in a game, a job dispatcher in cloud scheduling, or a vehicle in a road network. When agents share a common objective and must coordinate their actions, the setting is referred to as cooperative MARL. The training difficulty of cooperative MARL grows rapidly with the number of agents $N$, stemming from multi-agent credit assignment: when all agents jointly contribute to the shared objective, no individual agent's contribution can be cleanly identified.

Specifically, in cooperative MARL, all $N$ agents share a single return, defined as the cumulative team reward, and compute a learning signal to update their policies. Regardless of the form of the signal, a gradient direction for policy-gradient methods or an advantage for actor-critic methods, the signal always carries cross-agent noise that grows with $N$. \citep{kuba2021settling} formalize this in the policy-gradient setting, showing that the noise contributes $\mathcal{O}(N)$ variance to the per-agent gradient. Thus, to make cooperative MARL scalable, this cross-agent noise must be addressed.

A common approach to alleviating cross-agent noise is credit decomposition, which extracts per-agent contributions from the shared return, thereby reducing the variance of the per-agent gradient estimator. This is typically realized by subtracting a counterfactual baseline \citep{foerster2018counterfactual}, factorizing the joint $Q$ into per-agent $Q$-functions \citep{rashid2020monotonic, wang2020qplex}, or estimating each agent's marginal contribution \citep{wolpert2001optimal}. However, in these methods, the cross-agent noise still grows with $N$, dominating the per-agent signal at scale. 

Fortunately, many cooperative engineering systems admit differentiable analytical models that prescribe efficient system states. For example, queueing models characterize how workloads should be balanced across servers in cloud scheduling, while power-flow equations characterize how generation should meet demand subject to network constraints in power systems. However, these prescriptions remain system-level descriptions and do not directly yield local decision rules for individual agents. This raises a central question: can the differentiable structure of analytical references be converted into agent-local descent directions that reduce the variance of cooperative policy-gradient learning?

Building on this insight, we propose Descent-Guided Policy Gradient (DG-PG), a framework that integrates such analytical models into the multi-agent policy gradient. Our key contributions are:
\begin{itemize}
    \item We derive noise-free per-agent descent guidance by differentiating the deviation from the model-prescribed reference with respect to agent’s action. We augment the policy gradient with this guidance, accelerating learning without architectural changes to existing methods.
    \item We prove that: (i)~the augmented gradient does not alter the Nash equilibria of the original cooperative game (Theorem~\ref{thm:invariance}); (ii)~the variance of the per-agent gradient estimator reduces from $\mathcal{O}(N)$ to $\mathcal{O}(1)$, independent of the number of agents (Theorem~\ref{thm:variance}); and (iii)~the sample complexity to reach an $\epsilon$-optimal solution is $\widetilde{\mathcal{O}} (1/\epsilon)$ (Theorem~\ref{thm:complexity}).
    \item We evaluate DG-PG on heterogeneous cloud scheduling with up to 1500 agents, demonstrating scale-invariant convergence and outperforming carefully tuned MAPPO and IPPO baselines under matched architectures.
\end{itemize}

\section{Related Work}
\label{sec:related_work}

Classical methods address the credit assignment problem in cooperative MARL, typically by extracting per-agent contributions from the shared return. Difference Rewards \citep{wolpert2001optimal} isolate each agent's marginal contribution via counterfactual simulation, but require access to the reward function or a simulator. Value decomposition methods, such as VDN \citep{sunehag2017value}, QMIX \citep{rashid2020monotonic}, and QPLEX \citep{wang2020qplex}, factorize the joint Q-function into per-agent value contributions under structural assumptions (additivity, monotonicity), but fail on tasks whose optimal joint value violates these assumptions. COMA \citep{foerster2018counterfactual} computes per-agent counterfactual advantages from a centralized critic, with the critic conditioned on the joint action of all agents. Crucially, these methods still extract per-agent signals from the shared return, leaving each signal carrying cross-agent noise that grows with $N$. DG-PG instead obtains a noise-free descent direction via an analytical reference model and augments each agent's policy gradient with it, enabling stable per-agent learning that does not degrade as $N$ grows.

Another line of work addresses scalability in cooperative MARL through structural simplifications, typically restricting either how agents update or how they interact. HAPPO \citep{kuba2021trust} restricts agent updates to a sequential order, so that each agent optimizes against fixed teammate policies and can obtain its marginal advantage. This is effective but sacrifices parallel execution, as each agent must wait for all preceding updates. Mean-Field MARL \citep{yang2018mean} takes a different approach, replacing the $N$-agent interaction with a population-level approximation that simplifies each agent's optimization problem. This achieves scalability but requires agent homogeneity, precluding application to heterogeneous settings. DG-PG, in contrast, requires no structural change to how agents update or interact.

Concurrently, other lines of work also incorporate prior knowledge into RL training, but in forms different from DG-PG. First, Guided Policy Search \citep{levine2013guided} and Residual RL \citep{johannink2019residual} derive a base policy from domain knowledge and optimize from there. However, they still suffer from cross-agent noise at large $N$, so the policy cannot improve substantially beyond the base. Second, Physics-Informed Neural Networks \citep{raissi2019physics} encode prior knowledge as a constraint in the loss, rather than as a basis for learning or a direction for optimization. In contrast, DG-PG harnesses prior knowledge of the system to construct noise-free gradient guidance, significantly reducing noise in the gradient direction and accelerating learning without altering the optimal policy. We refer the reader to Appendix~\ref{app:comparison_table} for a complete comparison.

\section{Preliminaries}
\label{sec:preliminary}

We consider a cooperative MARL setting, formalized as a \textit{Cooperative Markov Game} (equivalently, a Dec-POMDP) \citep{oliehoek2016concise}. The game is defined by the tuple $\mathcal{M} = \langle \mathcal{N}, \mathcal{S}, \mathcal{A}, P, r, \gamma \rangle$, where $\mathcal{N} = \{1, \dots, N\}$ denotes the set of $N$ agents, $\mathcal{S}$ is the global state space, and $\mathcal{A} = \times_{i \in \mathcal{N}} \mathcal{A}^i$ is the joint action space with $\mathcal{A}^i$ is agent $i$'s action space. At each timestep $t$, the system is in state $s_t \in \mathcal{S}$. Each agent $i$ receives a local observation $o_t^i$ (derived from $s_t$) and selects an action $a_t^i \in \mathcal{A}^i$ according to its policy $\pi_{\theta^i}(a_t^i | o_t^i)$, parameterized by $\theta^i$. The environment then transitions to state $s_{t+1}$ according to the dynamics $P(s_{t+1} | s_t, \boldsymbol{a}_t)$, where $\boldsymbol{a}_t = (a_t^1, \dots, a_t^N)$ denotes the joint action.

\subsection{Cooperative Objective and Policy Gradient}

In the cooperative setting, all agents share a common global reward function $r(s_t, \boldsymbol{a}_t)$. The shared objective is to maximize the expected discounted cumulative reward:
\begin{equation}
    J(\boldsymbol{\theta}) = \mathbb{E}_{\boldsymbol{\pi}} \left[ \sum_{t=0}^{\infty} \gamma^t r(s_t, \boldsymbol{a}_t) \right],
\end{equation}
where $\gamma \in [0, 1)$ is the discount factor and $\boldsymbol{\pi}$ denotes the joint policy. Standard policy gradient methods perform gradient ascent on $J(\boldsymbol{\theta})$. For agent $i$, the gradient with respect to its parameters $\theta^i$ is given by the Policy Gradient Theorem \citep{sutton1999policy}:
\begin{equation}
    \nabla_{\theta^i} J(\boldsymbol{\theta}) = \mathbb{E}_{\boldsymbol{\pi}} \left[ \sum_{t=0}^{\infty} \gamma^t \nabla_{\theta^i} \log \pi_{\theta^i}(a_t^i | o_t^i) \, R_t \right],
    \label{eq:pg}
\end{equation}
where $R_t = \sum_{k=0}^{\infty} \gamma^k r_{t+k}$ is the sampled return from time $t$.

\subsection{The Variance Explosion Problem}
\label{subsec:variance_explosion}
While Equation~\ref{eq:pg} provides an unbiased gradient estimator, its variance poses a critical challenge in large-scale settings. Because the estimator relies on sampled returns that depend on the joint action $\boldsymbol{a}_t$ of all $N$ agents, each agent's gradient estimate carries noise from every other agent's sampled actions. Following \citep{kuba2021settling}, this variance can be approximately decomposed as:
\begin{equation}
    \text{Var}(\hat{g}^i) \approx \underbrace{\sigma_{\text{self}}^2}_{\text{Own Exploration}} + \underbrace{(N-1) \sigma_{\text{others}}^2}_{\text{Cross-Agent Noise}}.
    \label{eq:variance_decomp}
\end{equation}
The second term represents the noise from sampling other agents' actions. As $N$ increases, this term dominates, causing the total estimation variance to scale as $\mathcal{O}(N)$. This scaling implies that maintaining a constant estimation error requires the batch size to grow with $N$, rendering \textit{standard multi-agent policy gradient methods sample-inefficient for large-scale systems}.


\section{Method: Descent-Guided Policy Gradient}
\label{sec:method}

In this section, we present DG-PG, a framework that addresses variance explosion by exploiting the analytical models available in many engineering systems. We first define the \textit{reference state} as an efficient system state prescribed by analytical models and identify the conditions under which it yields a valid guidance term (Section~\ref{subsec:prior_assumptions}). We then augment the policy gradient with a noise-free, per-agent signal derived from the reference (Section~\ref{sec:descent_guided_formulation}). Under mild assumptions, DG-PG does not alter the Nash equilibria of the cooperative game while achieving agent-independent variance.

\subsection{Utilizing Analytical Priors: System State and Reference}
\label{subsec:prior_assumptions}

In many engineering systems, analytical models from operations research and control theory, such as fluid-limit approximations for queueing systems or equilibrium analysis for traffic networks, provide principled characterizations of desirable system behavior. While not executable under real-world stochasticity, these models offer something valuable, namely a \textit{reference} that describes what efficient system operation looks like under given conditions. Our key observation is that such a reference, if properly defined, can supply each agent with a clear, directed signal for policy improvement. This signal is far more informative than the noisy scalar reward $r_t$ aggregated over all agents.

As noted above, domain-specific models can prescribe efficient system states, but these states must be expressed in a concrete representational space. We define a \textit{system state} $\boldsymbol{x}_t \in \mathbb{R}^{m}$ as the vector of system-level quantities jointly determined by agents' actions (e.g., per-server resource utilization, per-link flow, or queue lengths), where $m$ is the number of monitored quantities. The \textit{reference state} $\tilde{\boldsymbol{x}}_t \in \mathbb{R}^{m}$ is the efficient state that the analytical model yields under current conditions, computed independently of the learned policies. A key structural property of $\boldsymbol{x}_t$ is that each agent's action typically affects only specific dimensions of this vector, so that each agent's influence on the system state is localized. We exploit this property in the gradient computation (Section~\ref{sec:descent_guided_formulation}).

For the reference to yield a valid guidance signal, it should ideally satisfy two conditions:

\begin{assumption}[Exogeneity]
    \label{ass:exogeneity}
    The reference state $\tilde{\boldsymbol{x}}_t$ does not depend on the policy parameters $\boldsymbol{\theta}$:
    \begin{equation}
        \nabla_{\boldsymbol{\theta}} \tilde{\boldsymbol{x}}_t = \mathbf{0}.
    \end{equation}
\end{assumption}

\begin{assumption}[Descent-Aligned Reference]
    \label{ass:alignment}
    Moving the system state toward the reference improves system performance. Formally, for any $\boldsymbol{x}_t \neq \tilde{\boldsymbol{x}}_t$:
    \begin{equation}
        \left\langle \nabla_{\boldsymbol{x}} J(\boldsymbol{x}_t),\; \tilde{\boldsymbol{x}}_t - \boldsymbol{x}_t \right\rangle > 0.
    \end{equation}
\end{assumption}

When both conditions hold, the reference provides a stable, directionally correct target that does not shift as policies update (Assumption~\ref{ass:exogeneity}), and approaching it is expected to improve the cooperative objective (Assumption~\ref{ass:alignment}). This means that any policy update that steers the system state toward $\tilde{\boldsymbol{x}}_t$ is aligned with the original optimization goal, a property we exploit in the next subsection to construct a low-variance gradient estimator. Both assumptions are broadly satisfied in structured engineering systems, and we verify them concretely for the cloud scheduling case in Appendix~\ref{app:verification_cloud}

Importantly, $\tilde{\boldsymbol{x}}_t$ need not be reachable, because it may arise from idealized fluid-limit or steady-state models. DG-PG uses $\tilde{\boldsymbol{x}}_t$ only to define a descent direction, not as a target to be attained exactly.

\subsection{Descent-Guided Policy Gradient}
\label{sec:descent_guided_formulation}

Given a reference $\tilde{\boldsymbol{x}}_t$ satisfying Assumptions~\ref{ass:exogeneity}--\ref{ass:alignment}, steering the system toward $\tilde{\boldsymbol{x}}_t$ at each step is directionally aligned with improving $J$. The challenge is to incorporate this information into the learning process in a way that (i) does not alter the original cooperative objective, and (ii) provides each agent with a gradient signal free of cross-agent noise. Our DG-PG achieves both by augmenting the policy gradient with a per-agent guidance term derived from the analytical model.

We construct DG-PG in three steps: define a deviation measure between the system state and the reference, differentiate it to obtain per-agent descent directions, and augment the standard policy gradient with these directions to eliminate cross-agent noise.

\textbf{Deviation Measure.}
We first define a per-step deviation function to quantify how far the system state deviates from the reference:
\begin{align}
    d(\boldsymbol{x}_t, \tilde{\boldsymbol{x}}_t) = \frac{1}{2} \| \boldsymbol{x}_t - \tilde{\boldsymbol{x}}_t \|^2.
    \label{eq:deviation}
\end{align}
By Assumption~\ref{ass:alignment}, reducing $d$ at each step is directionally consistent with improving $J$. However, simply incorporating $d$ into the reward would yield a scalar signal that cannot distinguish each agent's individual contribution to the deviation. Since $d$ is differentiable with respect to $\boldsymbol{x}_t$ and $\boldsymbol{x}_t$ depends differentiably on each $a_t^i$, the gradient $\partial d / \partial a_t^i$ can be computed analytically. Thus, the gradient of $d$ can serve as a guidance signal for each agent, aligned with improving $J$.

\textbf{DG-PG Gradient.} We augment the standard policy gradient with the gradient of $d$, yielding the DG-PG gradient:
\begin{align}
    \nabla_{\theta^i} J_{\alpha}
     \triangleq (1-\alpha) \, \nabla_{\theta^i} J - \alpha \, \nabla_{\theta^i} \mathcal{G},
    \label{eq:augmented_gradient_def}
\end{align}
where $\alpha \in (0,1)$ is the guidance weight, and $\mathcal{G} \triangleq \mathbb{E}_{s \sim \nu^{\boldsymbol{\pi}}} \!\left[ d(\boldsymbol{x}_t, \tilde{\boldsymbol{x}}_t) \right]$ is the expected deviation under the state visitation distribution $\nu^{\boldsymbol{\pi}}$ induced by the joint policy. The DG-PG gradient $\nabla_{\theta^i} J_{\alpha}$ thus retains the optimization direction of $J$ while incorporating a guidance signal from the reference model. We prove that this augmentation preserves the Nash equilibria of the original game (Theorem~\ref{thm:invariance}).

To evaluate $\nabla_{\theta^i} \mathcal{G}$, we apply the chain rule through $\boldsymbol{x}_t$ to isolate each agent's contribution to the deviation:
\begin{align}
    \frac{\partial d}{\partial a_t^i} = \left\langle \nabla_{\boldsymbol{x}} d, \, \frac{\partial \boldsymbol{x}_t}{\partial a^i_t} \right\rangle = \langle \boldsymbol{x}_t - \tilde{\boldsymbol{x}}_t, \, \boldsymbol{z}_t^i \rangle,
    \label{eq:guidance_coefficient}
\end{align}
where $\boldsymbol{z}_t^i \triangleq \frac{\partial \boldsymbol{x}_t}{\partial a^i_t}$ is agent $i$'s \textit{local influence vector}, the partial derivative of the system state with respect to its action. This coefficient measures the deviation projected onto the dimensions that agent $i$ influences. It depends only on the current state $\boldsymbol{x}_t$, the exogenous reference $\tilde{\boldsymbol{x}}_t$, and the agent-local derivative $\boldsymbol{z}_t^i$, so it carries no cross-agent noise and can be computed analytically. The concrete form of $\boldsymbol{z}_t^i$ depends on the domain, and we derive it for cloud scheduling in Appendix~\ref{app:influence_derivation}.

Substituting Equation~\ref{eq:guidance_coefficient} into the policy gradient theorem yields the gradient of the guidance term:
\begin{align}
    \nabla_{\theta^i} \mathcal{G}
     = \mathbb{E}_{s \sim \nu^{\boldsymbol{\pi}},\, \boldsymbol{a} \sim \boldsymbol{\pi}} \left[ \langle \boldsymbol{x}_t - \tilde{\boldsymbol{x}}_t, \, \boldsymbol{z}_t^i \rangle \, \nabla_{\theta^i} \log \pi^i(a^i_t|o^i_t) \right].
    \label{eq:gradient_decomposition}
\end{align}

\textbf{Gradient Estimator.}
In practice, we estimate the DG-PG gradient $\nabla_{\theta^i} J_{\alpha}$ via the score function method. At each step $t$ of a sampled trajectory, the gradient estimate for agent $i$ is:
\begin{align}
    \hat{g}_{\text{DG},t}^i = (1-\alpha) \, \hat{g}_{J,t}^i - \alpha \, \hat{g}_{\mathcal{G},t}^i,
    \label{eq:dg_estimator}
\end{align}
where
\begin{align}
    \hat{g}_{J,t}^i &= R_t \, \nabla_{\theta^i} \log \pi^i(a^i_t|o^i_t), \\
    \hat{g}_{\mathcal{G},t}^i &= \langle \boldsymbol{x}_t - \tilde{\boldsymbol{x}}_t, \, \boldsymbol{z}^i_t \rangle \, \nabla_{\theta^i} \log \pi^i(a^i_t|o^i_t).
\end{align}
Each agent then updates its policy parameters via gradient ascent with learning rate $\eta > 0$:
\begin{equation}
    \theta^i_{k+1} = \theta^i_k + \eta \, \hat{g}_{\text{DG},t}^i,
    \label{eq:dg_update}
\end{equation}

\textbf{Variance Properties.}
The two components of $\hat{g}_{\text{DG},t}^i$ have fundamentally different variance properties. The standard policy gradient sample $\hat{g}_{J,t}^i$ uses the return $R_t$, which aggregates all $N$ agents' stochastic actions, yielding $\mathcal{O}(N)$ variance. The guidance sample $\hat{g}_{\mathcal{G},t}^i$ uses an analytically computed coefficient $\langle \boldsymbol{x}_t - \tilde{\boldsymbol{x}}_t, \boldsymbol{z}^i_t \rangle$, contributing $\mathcal{O}(1)$ variance independent of $N$. We formalize how the choice of $\alpha$ controls this variance trade-off in Section~\ref{sec:theory}.

\subsection{Implementation}

\textbf{Integration with MAPPO.}
DG-PG can be integrated into the MAPPO framework \citep{yu2022surprising} by modifying the advantage estimator. After collecting trajectories and computing the standard GAE advantages $\hat{A}_{GAE}^i(t)$, they are augmented with the guidance term:
\begin{equation}
    \hat{A}_{DG}^i(t) = (1-\alpha) \hat{A}_{GAE}^i(t) - \alpha \cdot \langle \boldsymbol{x}_t - \tilde{\boldsymbol{x}}_t, \boldsymbol{z}_t^i \rangle,
    \label{eq:a_dg}
\end{equation}
The computational overhead is minimal, since the reference trajectory $\{\tilde{\boldsymbol{x}}_t\}$ is computed once per batch using the analytical model, and each guidance coefficient is just an inner product. The policy and critic are then updated using standard PPO objectives with the modified advantages. 

\textbf{Comparison with Existing Methods.}
Unlike counterfactual methods (e.g., COMA \citep{foerster2018counterfactual}) that require training $N$ separate critics, or value decomposition methods (e.g., QMIX \citep{rashid2020monotonic}) that need additional mixing networks, DG-PG requires \textit{no architectural changes} to existing methods. The only modification is a single-line change to the advantage estimator, given in Equation~\ref{eq:a_dg}.

\section{Theoretical Guarantees}
\label{sec:theory}

In this section, we establish the theoretical foundations of DG-PG. Our central result is that the DG-PG gradient estimator achieves \textit{agent-independent variance} (Theorem~\ref{thm:variance}), breaking the $\mathcal{O}(N)$ scaling barrier of standard policy gradients. We complement this with two supporting guarantees: (1) \textit{consistency} (Theorem~\ref{thm:invariance}), ensuring that DG-PG preserves the Nash equilibria of the original game; and (2) \textit{sample complexity} (Theorem~\ref{thm:complexity}), showing that the variance reduction yields agent-independent convergence rates under standard regularity assumptions (the complete dependency structure is given in Table~\ref{tab:assumptions_deps}, Appendix~\ref{app:assumptions_table}).

\subsection{Nash Invariance}

A primary concern when augmenting the policy gradient is whether the additional guidance term shifts the Nash equilibria of the original cooperative game. We address this by showing that DG-PG preserves the stationary equilibria of the original objective.

\begin{theorem}[Nash Invariance]
    \label{thm:invariance}
    Let $\boldsymbol{\theta}^*$ be a Nash equilibrium of the original cooperative game, i.e., $\nabla_{\theta^i} J(\boldsymbol{\theta}^*) = \mathbf{0}$ for all $i \in \mathcal{N}$. Under Assumptions~\ref{ass:exogeneity}--\ref{ass:alignment}, the DG-PG gradient also vanishes at $\boldsymbol{\theta}^*$:
    \begin{equation}
        \nabla_{\theta^i} J_{\alpha}(\boldsymbol{\theta}^*) = \mathbf{0}, \quad \forall \alpha \in (0, 1), \; \forall i \in \mathcal{N}.
    \end{equation}
\end{theorem}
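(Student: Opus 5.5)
The plan is to show that the guidance gradient $\nabla_{\theta^i}\mathcal{G}$ vanishes on its own at $\boldsymbol{\theta}^*$. Once we have that, Equation~\ref{eq:augmented_gradient_def} gives $\nabla_{\theta^i} J_\alpha(\boldsymbol{\theta}^*) = (1-\alpha)\cdot\mathbf{0} - \alpha\cdot\mathbf{0} = \mathbf{0}$ for every $\alpha\in(0,1)$ and every $i$. By Equation~\ref{eq:gradient_decomposition}, $\nabla_{\theta^i}\mathcal{G}$ is an expectation of the per-step coefficient $\langle \boldsymbol{x}_t - \tilde{\boldsymbol{x}}_t, \boldsymbol{z}_t^i\rangle$ times a score function. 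It therefore suffices to show that at $\boldsymbol{\theta}^*$ the system state coincides with the reference, $\boldsymbol{x}_t = \tilde{\boldsymbol{x}}_t$, almost surely under $\nu^{\boldsymbol{\pi}^*}$. Then the coefficient is identically zero and the expectation is $\mathbf{0}$.

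\textbf{Step 1 (stationarity transferred to the system state).} We write $J$ as a function of the system state, as Assumption~\ref{ass:alignment} implicitly does, and apply the chain rule:
\begin{equation*}
\nabla_{\theta^i} J = \mathbb{E}\bigl[(\partial \boldsymbol{x}_t/\partial\theta^i)^\top \nabla_{\boldsymbol{x}} J(\boldsymbol{x}_t)\bigr],
\qquad
\partial \boldsymbol{x}_t/\partial\theta^i = \boldsymbol{z}_t^i\, \partial a_t^i/\partial\theta^i .
\end{equation*}
The vanishing of all $N$ block gradients at $\boldsymbol{\theta}^*$ then says that $\nabla_{\boldsymbol{x}}J$ is orthogonal to every direction the agents can jointly induce in $\boldsymbol{x}$.

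\textbf{Step 2 (contradiction via Assumption~\ref{ass:alignment}).} Suppose $\boldsymbol{x}_t\neq\tilde{\boldsymbol{x}}_t$ on a set of positive $\nu^{\boldsymbol{\pi}^*}$-measure. Assumption~\ref{ass:alignment} gives $\langle\nabla_{\boldsymbol{x}}J,\tilde{\boldsymbol{x}}_t-\boldsymbol{x}_t\rangle>0$ there, so $\nabla_{\boldsymbol{x}}J\neq\mathbf{0}$ and the displacement toward the reference is a strict ascent direction in $\boldsymbol{x}$-space. We then construct a parameter perturbation $\delta\boldsymbol{\theta}$ whose induced state change has a positive component along $\tilde{\boldsymbol{x}}_t-\boldsymbol{x}_t$. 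Its directional derivative $\langle\nabla_{\boldsymbol{\theta}}J,\delta\boldsymbol{\theta}\rangle$ is then strictly positive, contradicting $\nabla_{\theta^i}J(\boldsymbol{\theta}^*)=\mathbf{0}$ for all $i$. Hence $\boldsymbol{x}_t=\tilde{\boldsymbol{x}}_t$ a.s. Assumption~\ref{ass:exogeneity} enters because it guarantees $\tilde{\boldsymbol{x}}_t$ does not move when $\boldsymbol{\theta}$ is perturbed. Without it, the differentiation in Equation~\ref{eq:gradient_decomposition} would pick up an extra term, and the reference would not be a fixed target in Step 2.

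\textbf{Step 3 (conclude).} Plugging $\boldsymbol{x}_t-\tilde{\boldsymbol{x}}_t=\mathbf{0}$ into Equation~\ref{eq:gradient_decomposition} gives $\nabla_{\theta^i}\mathcal{G}(\boldsymbol{\theta}^*)=\mathbf{0}$, and the theorem follows.

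The main obstacle is Step 2, namely converting a nonzero ascent direction in $\boldsymbol{x}$-space into a nonzero gradient in $\boldsymbol{\theta}$-space. This needs the map $\boldsymbol{\theta}\mapsto\boldsymbol{x}_t$ to be locally rich enough. Concretely, the span of the stacked influences $\{\boldsymbol{z}_t^i\,\partial a_t^i/\partial\theta^i\}_{i}$ must contain a vector with positive inner product with $\tilde{\boldsymbol{x}}_t-\boldsymbol{x}_t$. I would first try to get this from the localized-influence structure (each coordinate of $\boldsymbol{x}$ is moved by some agent, and the policy parameterization is expressive enough to shift $a_t^i$ in either direction). If that fails, I would state it explicitly as a mild controllability or rank condition alongside Assumptions~\ref{ass:exogeneity}--\ref{ass:alignment}. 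A secondary subtlety is handling the expectation over $\nu^{\boldsymbol{\pi}}$ rather than a single state. I would handle it by working pointwise on the positive-measure set and using the fact that the sign of the inner product is constant there.
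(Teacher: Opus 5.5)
Your overall structure matches the paper's proof. The paper also argues by contradiction that at a Nash point Assumption~\ref{ass:alignment} forces $\nabla_{\boldsymbol{x}}J\neq\mathbf{0}$ wherever $\boldsymbol{x}\neq\tilde{\boldsymbol{x}}$. It lifts this to $\nabla_{\theta^i}J\neq\mathbf{0}$ through an extra expressiveness hypothesis (Remark~\ref{remark:sufficient_expressiveness}: $\nabla_{\theta^i}\boldsymbol{x}$ has full row rank for some $i$), and so obtains $\boldsymbol{x}=\tilde{\boldsymbol{x}}$ on visited states. It then gets $\nabla\mathcal{G}(\boldsymbol{\theta}^*)=\mathbf{0}$ because $\mathcal{G}\ge 0$ attains its global minimum $0$ there. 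Unlike your substitution into Equation~\ref{eq:gradient_decomposition}, that step does not depend on the exact form of the gradient formula.

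You are right that some such hypothesis is unavoidable, but the one you propose is too weak, and the inference in your Step~2 fails. An induced change $\delta\boldsymbol{x}$ with a positive component along $\tilde{\boldsymbol{x}}_t-\boldsymbol{x}_t$ need not satisfy $\langle\nabla_{\boldsymbol{x}}J,\delta\boldsymbol{x}\rangle>0$. Assumption~\ref{ass:alignment} only constrains the inner product of $\nabla_{\boldsymbol{x}}J$ with the single vector $\tilde{\boldsymbol{x}}_t-\boldsymbol{x}_t$.

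Here is a concrete counterexample. Let $\boldsymbol{u}=\boldsymbol{x}-\tilde{\boldsymbol{x}}\in\mathbb{R}^2$ and $J=-\tfrac12\bigl(u_1^2+(u_1-u_2)^2\bigr)$. Then $\langle\nabla_{\boldsymbol{x}}J,\tilde{\boldsymbol{x}}-\boldsymbol{x}\rangle=u_1^2+(u_1-u_2)^2>0$ for every $\boldsymbol{u}\neq\mathbf{0}$, so Assumption~\ref{ass:alignment} holds globally, and Assumption~\ref{ass:exogeneity} holds trivially. Let a scalar parameter act through $\boldsymbol{u}(\theta)=(-1,\theta-1)$.
\begin{itemize}
\item The only reachable direction is $(0,1)$, which has positive inner product with $\tilde{\boldsymbol{x}}-\boldsymbol{x}=(1,1)$ at $\theta=0$. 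So your richness condition holds.
\item Yet $J=-\tfrac12(1+\theta^2)$ is maximized at $\theta^*=0$.
\item Meanwhile $\mathcal{G}=\tfrac12\bigl(1+(\theta-1)^2\bigr)$ has $\mathcal{G}'(0)=-1$, giving $\nabla_\theta J_\alpha(0)=\alpha\neq 0$.
\end{itemize}
So under your condition the statement itself fails, not just your proof of it.

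The repair is to require that the displacement itself be realizable. You need a single $\delta\boldsymbol{\theta}$ with $(\partial\boldsymbol{x}_t/\partial\boldsymbol{\theta})\,\delta\boldsymbol{\theta}=c_t(\tilde{\boldsymbol{x}}_t-\boldsymbol{x}_t)$ and $c_t\ge 0$, simultaneously at all visited states, with $c_t>0$ on the set where $\boldsymbol{x}_t\neq\tilde{\boldsymbol{x}}_t$. Then $\langle\nabla_{\boldsymbol{\theta}}J,\delta\boldsymbol{\theta}\rangle=\mathbb{E}\bigl[c_t\langle\nabla_{\boldsymbol{x}}J,\tilde{\boldsymbol{x}}_t-\boldsymbol{x}_t\rangle\bigr]>0$. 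This is exactly where your ``constant sign'' remark does its work. Reachability separately at each state is not enough, because one $\delta\boldsymbol{\theta}$ acts on all states at once and could ascend at one while descending at another.

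Alternatively, adopt the paper's full-row-rank condition. Under it, $(\partial\boldsymbol{x}/\partial\theta^i)^\top$ is injective, so $\nabla_{\boldsymbol{x}}J\neq\mathbf{0}$ alone forces $\nabla_{\theta^i}J\neq\mathbf{0}$, and alignment is needed only to guarantee $\nabla_{\boldsymbol{x}}J\neq\mathbf{0}$. The paper applies this pointwise and likewise glosses over the averaging over $\nu^{\boldsymbol{\pi}}$.

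The corrected version of your condition is weaker than the paper's, since only the specific displacement must be reachable, and it makes alignment do genuine work. With either fix, the rest of your argument goes through at the paper's level of rigor.
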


This invariance holds for any guidance weight $\alpha \in (0, 1)$, so $\alpha$ can be chosen to control variance without changing the equilibrium set of the original game.

\begin{proof}[Proof Sketch]
We first show by contradiction that $\nabla_{\theta^i} \mathcal{G}(\boldsymbol{\theta}^*) = \mathbf{0}$ at any Nash equilibrium $\boldsymbol{\theta}^*$ (Lemma~\ref{lemma:guidance_vanishing}). Since both $\nabla_{\theta^i} J$ and $\nabla_{\theta^i} \mathcal{G}$ vanish at $\boldsymbol{\theta}^*$, the DG-PG gradient $\nabla_{\theta^i} J_{\alpha} = (1-\alpha) \cdot \mathbf{0} - \alpha \cdot \mathbf{0} = \mathbf{0}$. The detailed proof is in Appendix~\ref{app:proof_invariance}.
\end{proof}

\subsection{Variance Reduction}
\label{sec:variance_analysis}

Having established that DG-PG preserves the Nash equilibria of $J$ (Theorem~\ref{thm:invariance}), we now formalize the variance properties of the DG-PG gradient estimator. The return $R_t$ in the standard policy gradient aggregates all $N$ agents' stochastic actions, so the per-agent gradient variance $\mathrm{Var}(\hat{g}_J^i) = \mathcal{O}(N)$. In contrast, the guidance coefficient $\langle \boldsymbol{x}_t - \tilde{\boldsymbol{x}}_t, \boldsymbol{z}_t^i \rangle$ depends only on agent-local quantities (Section~\ref{sec:descent_guided_formulation}), yielding guidance gradient variance $\mathrm{Var}(\hat{g}_{\mathcal{G}}^i) = \mathcal{O}(1)$ independent of $N$.

Since the DG-PG estimator combines $\hat{g}_J^i$ and $\hat{g}_{\mathcal{G}}^i$, its variance depends on their correlation $\rho = \mathrm{Corr}(\hat{g}_J^i, \hat{g}_{\mathcal{G}}^i)$. Under Assumption~\ref{ass:alignment}, improving $J$ reduces the deviation $\mathcal{G}$, so $\hat{g}_J^i$ and $\hat{g}_{\mathcal{G}}^i$ are negatively correlated ($\rho < 0$). The following theorem characterizes the optimal variance as a function of $\rho$ and the variances of two gradient estimators.

\begin{theorem}[Agent-Independent Variance]
    \label{thm:variance}
    Under Assumptions~\ref{ass:exogeneity}--\ref{ass:alignment}, let $\sigma_J^2 \triangleq \mathrm{Var}(\hat{g}_J^i)$, $\sigma_{\mathcal{G}}^2 \triangleq \mathrm{Var}(\hat{g}_{\mathcal{G}}^i)$, and $\rho \triangleq \mathrm{Corr}(\hat{g}_J^i, \hat{g}_{\mathcal{G}}^i) \in (-1, 0)$. The minimum variance over the guidance weight $\alpha$ is
    \begin{equation}
        \min_{\alpha \in (0,1)} \mathrm{Var}(\hat{g}_{\mathrm{DG}}^i) = \frac{\sigma_J^2 \, \sigma_{\mathcal{G}}^2 \,(1 - \rho^2)}{\sigma_J^2 + \sigma_{\mathcal{G}}^2 + 2\rho\,\sigma_J \sigma_{\mathcal{G}}}.
    \end{equation}
    Since $\sigma_J^2 = \mathcal{O}(N)$ and $\sigma_{\mathcal{G}}^2 = \mathcal{O}(1)$, this minimum variance is $\mathcal{O}(1)$, independent of $N$.
\end{theorem}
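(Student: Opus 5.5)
The plan is to treat this as the classical minimum-variance combination of two correlated estimators. Write $X \triangleq \hat g_J^i$ and $Y \triangleq -\hat g_{\mathcal G}^i$, so that $\hat g_{\mathrm{DG}}^i = (1-\alpha)X + \alpha Y$. Then $\mathrm{Var}(Y)=\sigma_{\mathcal G}^2$ and $\mathrm{Corr}(X,Y) = -\rho \triangleq \rho' \in (0,1)$. Treating the gradient as a scalar, or working coordinatewise or with the trace of the covariance, expanding the variance gives the quadratic
\begin{equation*}
V(\alpha) = (1-\alpha)^2\sigma_J^2 + \alpha^2\sigma_{\mathcal G}^2 + 2\alpha(1-\alpha)\rho'\sigma_J\sigma_{\mathcal G}.
\end{equation*}
Assumptions~\ref{ass:exogeneity}--\ref{ass:alignment} enter only to justify the sign $\rho<0$ and to let $\tilde{\boldsymbol x}_t$ be treated as a constant inside the covariance. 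They are taken as given through the hypothesis $\rho\in(-1,0)$.

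\textbf{Step 1: minimize the quadratic.} The leading coefficient of $V$ is
\begin{equation*}
D \triangleq \sigma_J^2+\sigma_{\mathcal G}^2-2\rho'\sigma_J\sigma_{\mathcal G} = \sigma_J^2+\sigma_{\mathcal G}^2+2\rho\sigma_J\sigma_{\mathcal G}.
\end{equation*}
Since $|\rho|<1$, $D>0$, so $V$ is strictly convex. Setting $V'(\alpha)=0$ gives
\begin{equation*}
\alpha^* = \frac{\sigma_J^2+\rho\sigma_J\sigma_{\mathcal G}}{D}.
\end{equation*}
Substituting $\alpha^*$ back and simplifying (routine algebra) yields $V(\alpha^*) = \sigma_J^2\sigma_{\mathcal G}^2(1-\rho^2)/D$, which is the claimed expression.

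\textbf{Step 2: check that $\alpha^*$ is admissible.} We need $\alpha^*\in(0,1)$. Positivity holds iff $\sigma_J + \rho\sigma_{\mathcal G}>0$, which is true in the regime of interest $\sigma_J\gg\sigma_{\mathcal G}$. The condition $\alpha^*<1$ holds iff $\sigma_{\mathcal G}+\rho\sigma_J>0$, i.e.\ $|\rho|<\sigma_{\mathcal G}/\sigma_J$.

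I expect this to be the main obstacle. When $\sigma_J=\mathcal O(N)$ dominates and $|\rho|$ is not small, $\alpha^*$ can exceed $1$, and the minimum over the open interval $(0,1)$ is then only an infimum, attained as $\alpha\to 1$ with value $\sigma_{\mathcal G}^2$. I would first try to state the formula under the explicit condition $\sigma_{\mathcal G}+\rho\sigma_J>0$, read as the formula for the unconstrained minimizer. In the complementary case I would note that the constrained infimum is $\sigma_{\mathcal G}^2 = \mathcal O(1)$ anyway, so the $N$-independence conclusion survives in either case.

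\textbf{Step 3: show the minimum is $\mathcal O(1)$.} Complete the square in the denominator:
\begin{equation*}
D = (\sigma_J+\rho\sigma_{\mathcal G})^2 + (1-\rho^2)\sigma_{\mathcal G}^2 \ \ge\ (\sigma_J+\rho\sigma_{\mathcal G})^2 \ \ge\ (\sigma_J-\sigma_{\mathcal G})^2 .
\end{equation*}
Whenever $\sigma_J\ge 2\sigma_{\mathcal G}$, which holds for large $N$ since $\sigma_J^2=\mathcal O(N)$ grows while $\sigma_{\mathcal G}^2=\mathcal O(1)$ does not, we have $(\sigma_J-\sigma_{\mathcal G})^2\ge \sigma_J^2/4$. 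The minimum variance is therefore at most $4(1-\rho^2)\sigma_{\mathcal G}^2\le 4\sigma_{\mathcal G}^2 = \mathcal O(1)$.

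In the remaining case $\sigma_J<2\sigma_{\mathcal G}$, the trivial bound $\min_\alpha V(\alpha)\le\min(\sigma_J^2,\sigma_{\mathcal G}^2)\le\sigma_{\mathcal G}^2$ already gives $\mathcal O(1)$. Moreover, the formula tends to $(1-\rho^2)\sigma_{\mathcal G}^2$ as $\sigma_J\to\infty$, which shows the bound is tight up to constants and independent of $N$.
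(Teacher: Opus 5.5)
Your Step~1 is exactly the paper's proof: the same quadratic $V(\alpha)$, the same stationary point $\alpha^*=(\sigma_J^2+\rho\sigma_J\sigma_{\mathcal{G}})/D$, and the same substituted value. The paper then divides through by $\sigma_J^2$ to read off $\mathcal{O}(1)$. It writes the correction as $\mathcal{O}(N^{-1})$, although the cross term $2\rho\sigma_{\mathcal{G}}/\sigma_J$ is only $\mathcal{O}(N^{-1/2})$.

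The paper never does your Step~2, and your concern there is a genuine defect in the statement, not a technicality. From $\alpha^*-1=-\sigma_{\mathcal{G}}(\sigma_{\mathcal{G}}+\rho\sigma_J)/D$, with $D$ as you defined it, one sees that $\alpha^*>1$ for any fixed $\rho<0$ once $\sigma_J/\sigma_{\mathcal{G}}$ is large. That is precisely the regime the theorem is about. The paper's remark that $\alpha^*\to 1$ is correct, but the limit is approached from above.

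For instance, $\sigma_J=10$, $\sigma_{\mathcal{G}}=1$, $\rho=-1/2$ gives $\alpha^*=95/91$ and formula value $75/91$. But $V$ is decreasing on $(0,1)$, so $\inf_{(0,1)}V=V(1)=1$, and this infimum is not attained. Hence the displayed equality is false as stated, and no proof of it can succeed. When $\sigma_J^2$ grows like $N$, it can hold only if $|\rho|=\mathcal{O}(N^{-1/2})$, contrary to the paper's treatment of $\rho$ as a fixed constant.

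Your salvage is the right one: assert the formula only under $\sigma_J+\rho\sigma_{\mathcal{G}}>0$ and $\sigma_{\mathcal{G}}+\rho\sigma_J>0$, and prove the $\mathcal{O}(1)$ claim for the constrained infimum in all cases. State both conditions explicitly, not just the second.

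Two small points on Step~3.

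First, the case split is unnecessary. Since $V(\alpha^*)\le V(1)=\sigma_{\mathcal{G}}^2$ and $\inf_{(0,1)}V\le V(1)$, both the formula and the true infimum are at most $\sigma_{\mathcal{G}}^2$ for every $N$. This also shows that the $\mathcal{O}(1)$ conclusion uses only $\sigma_{\mathcal{G}}^2=\mathcal{O}(1)$, not $\sigma_J^2=\mathcal{O}(N)$. In any case, $\sigma_J^2=\mathcal{O}(N)$ is only an upper bound, so it would not by itself give $\sigma_J\ge 2\sigma_{\mathcal{G}}$.

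Second, rephrase your closing tightness remark. As $\sigma_J\to\infty$ the formula tends to $(1-\rho^2)\sigma_{\mathcal{G}}^2$. By your own Step~2, in that limit this is strictly below the actual constrained infimum $\sigma_{\mathcal{G}}^2$.
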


\begin{proof}[Proof Sketch]
    The variance of the DG-PG estimator $\sigma_{\mathrm{DG}}^2(\alpha)$ is a convex quadratic in $\alpha$ with cross-term $\rho\,\sigma_J\sigma_{\mathcal{G}}$. Minimizing in closed form yields 
    $$\sigma_{\mathrm{DG},\min}^2 = \sigma_J^2\sigma_{\mathcal{G}}^2(1-\rho^2)\,/\,({\sigma_J^2 + \sigma_{\mathcal{G}}^2 + 2\rho\,\sigma_J\sigma_{\mathcal{G}}}).$$ 
    Substituting $\sigma_J^2 = \mathcal{O}(N)$ and $\sigma_{\mathcal{G}}^2 = \mathcal{O}(1)$, the denominator is dominated by $\sigma_J^2$ and is thus $\mathcal{O}(N)$, while the numerator is $\mathcal{O}(N)$, giving $\sigma_{\mathrm{DG},\min}^2 = \mathcal{O}(1)$ for all $N$. 
\end{proof}

\textbf{The role of $\alpha$.} Appendix~\ref{app:proof_variance_complexity} shows that the optimal weight $\alpha^*$ is a function of $\sigma_J^2$, $\sigma_{\mathcal{G}}^2$, and $\rho$. Since only $\sigma_J^2$ grows with $N$, this optimizer shifts toward the guidance signal and satisfies $\alpha^*\to 1$ in the large-$N$ limit. At finite scales, $\alpha$ remains a tuning choice that balances variance reduction with the original policy-gradient signal, and we validate this choice empirically in Section~\ref{sec:experiments}.


\subsection{Convergence Rate}

The variance reduction in Theorem~\ref{thm:variance}, combined with standard regularity assumptions, yields agent-independent sample complexity.

\begin{theorem}[Sample Complexity]
    \label{thm:complexity}
    Assume $J$ is $L$-smooth and satisfies the Polyak-Łojasiewicz (PL) condition with PL constant $\mu \in (0, L]$ (see Assumption~\ref{ass:smoothness_pl} in Appendix~\ref{app:proof_complexity}). Given the optimal variance from Theorem~\ref{thm:variance}, DG-PG achieves an $\epsilon$-optimal policy in:
    \begin{equation}
        T = \mathcal{O}\left( \frac{L}{\mu^2 \epsilon} \log \frac{1}{\epsilon} \right)
    \end{equation}
    iterations. This bound is independent of the number of agents $N$.
\end{theorem}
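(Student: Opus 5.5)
The plan is to run the standard stochastic-gradient analysis for smooth PL objectives on $J_\alpha$ with DG-PG updates. Because the PL inequality gives linear decrease up to a variance floor, the only place the number of agents can enter is through the second moment of the stochastic gradient. Theorem~\ref{thm:variance} has already bounded that second moment independently of $N$.

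\textbf{Step 1: reduce to a single-objective ascent problem.} I work with the suboptimality gap $\Delta_k \triangleq J^\ast - J(\boldsymbol{\theta}_k)$ and view Equation~\ref{eq:dg_update} as stochastic ascent with direction $\hat g_{\mathrm{DG}}$. Its mean is $\nabla J_\alpha = (1-\alpha)\nabla J - \alpha\nabla\mathcal{G}$, and its variance is $\sigma^2_{\mathrm{DG},\min} = \mathcal{O}(1)$ at the optimal $\alpha^\ast$. I need $\nabla J_\alpha$ to be a valid ascent direction for $J$. For this I use the alignment in Assumption~\ref{ass:alignment}, in the form $\langle \nabla J, -\nabla\mathcal{G}\rangle \ge 0$, which gives $\langle \nabla J, \nabla J_\alpha\rangle \ge (1-\alpha)\|\nabla J\|^2$. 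I also assume, as part of the regularity in Assumption~\ref{ass:smoothness_pl}, that $\|\nabla J_\alpha\|^2 \le C\|\nabla J\|^2$ for a constant $C$ independent of $N$, since the guidance coefficient is agent-local. If this is not available, the cleaner fallback is to state the PL/smoothness assumption directly for $J_\alpha$. By Theorem~\ref{thm:invariance} its stationary points contain those of $J$, so an $\epsilon$-optimal point of $J_\alpha$ serves as the target.

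\textbf{Step 2: one-step descent inequality.} $L$-smoothness gives
\begin{equation*}
\mathbb{E}[\Delta_{k+1}] \le \mathbb{E}[\Delta_k] - \eta\,\mathbb{E}\langle \nabla J, \nabla J_\alpha\rangle + \tfrac{L\eta^2}{2}\bigl(\mathbb{E}\|\nabla J_\alpha\|^2 + \sigma^2_{\mathrm{DG}}\bigr).
\end{equation*}
For $\eta \le c/L$, the quadratic gradient term is absorbed by the linear one, leaving $-\tfrac{\eta}{2}\|\nabla J\|^2$ up to constants. The PL condition $\|\nabla J\|^2 \ge 2\mu\Delta$ then yields the contraction
\begin{equation*}
\mathbb{E}[\Delta_{k+1}] \le (1-\mu\eta)\,\mathbb{E}[\Delta_k] + \tfrac{L\eta^2}{2}\,\sigma^2_{\mathrm{DG}}.
\end{equation*}

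\textbf{Step 3: unroll and tune the step size.} Unrolling gives
\begin{equation*}
\mathbb{E}[\Delta_T] \le (1-\mu\eta)^T\Delta_0 + \frac{L\eta\,\sigma^2_{\mathrm{DG}}}{2\mu}.
\end{equation*}
I choose $\eta = \mu\epsilon/(L\sigma^2_{\mathrm{DG}})$, clipped at $1/L$, so that the variance floor is at most $\epsilon/2$. I then take $T \ge \frac{1}{\mu\eta}\log\frac{2\Delta_0}{\epsilon} = \frac{L\sigma^2_{\mathrm{DG}}}{\mu^2\epsilon}\log\frac{2\Delta_0}{\epsilon}$. Since $\sigma^2_{\mathrm{DG}} = \mathcal{O}(1)$ by Theorem~\ref{thm:variance}, this is $T = \mathcal{O}\bigl(\tfrac{L}{\mu^2\epsilon}\log\tfrac1\epsilon\bigr)$, with no dependence on $N$. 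Agent-independence follows because $L$, $\mu$, and $\Delta_0$ are treated as problem constants, and the only quantity that could scale with $N$ is the variance.

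\textbf{Main obstacle.} The hardest step is Step 1: justifying that the biased direction $\nabla J_\alpha$ still contracts $\Delta_k$ at a rate proportional to $\mu$. Assumption~\ref{ass:alignment} is stated for $\boldsymbol{x}$ rather than $\boldsymbol{\theta}$, and transferring it through the chain rule and the score-function identity (Equation~\ref{eq:gradient_decomposition}) requires some care. My first attempt is to derive the angle condition from Assumption~\ref{ass:alignment} via Lemma~\ref{lemma:guidance_vanishing}'s argument. If that fails, I will assume PL directly on $J_\alpha$ and invoke Theorem~\ref{thm:invariance} to connect its optima to those of $J$.
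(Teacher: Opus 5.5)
Your skeleton is the same as the paper's: the $L$-smoothness ascent inequality, then the angle condition $\langle\nabla J,\nabla\mathcal{G}\rangle\le 0$ giving $\langle\nabla J,\bar g_{\mathrm{DG}}\rangle\ge(1-\alpha)\|\nabla J\|^2$, then PL contraction, unrolling, and $\eta\propto\epsilon$. On your ``main obstacle'': the paper does not derive the angle condition either; it simply reads it off Assumption~\ref{ass:alignment}. Lemma~\ref{lemma:guidance_vanishing}'s contradiction argument only concerns stationary points, so it cannot give a sign for $\langle\nabla J,\nabla\mathcal{G}\rangle$ elsewhere.

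\textbf{First gap: the bias bound.} The bound $\|\nabla J_\alpha\|^2\le C\|\nabla J\|^2$ is not part of Assumption~\ref{ass:smoothness_pl} and follows from none of the hypotheses. It forces $\nabla\mathcal{G}=\mathbf{0}$ wherever $\nabla J=\mathbf{0}$, so it is a quantitative strengthening of Lemma~\ref{lemma:guidance_vanishing}, which already needs the expressiveness condition of Remark~\ref{remark:sufficient_expressiveness}. It would buy you a floor without the $B^2$ term, but it is unnecessary. The paper does the following:
\begin{itemize}
\item It writes $\mathbb{E}_k\|\hat g_{\mathrm{DG}}\|^2=\|\bar g_{\mathrm{DG}}\|^2+\sigma_{\mathrm{DG}}^2$ and splits $\|\bar g_{\mathrm{DG}}\|^2\le 2(1-\alpha)^2\|\nabla J\|^2+2\alpha^2\|\nabla\mathcal{G}\|^2$.
\item It uses only an absolute bound $\|\nabla\mathcal{G}\|\le B$, from bounded deviation, bounded influence vectors, and bounded score.
\item The first piece is absorbed by taking $\eta\le 1/(2L(1-\alpha))$, and $2\alpha^2B^2$ joins $\sigma_{\mathrm{DG}}^2$ in the $\mathcal{O}(\eta^2)$ floor.
\end{itemize}
The angle condition has already neutralized the bias in the first-order term, so this costs only a constant. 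Then $\eta=(1-\alpha)\mu\epsilon/\bigl(L(2\alpha^2B^2+\sigma_{\mathrm{DG}}^2)\bigr)$ gives the stated rate.

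Your fallback does not work. PL on $J_\alpha$ yields $\epsilon$-optimality for $J_\alpha=(1-\alpha)J-\alpha\mathcal{G}$, which trades $J$ against $\mathcal{G}$. Theorem~\ref{thm:invariance} gives only the one-way inclusion $\{\nabla J=\mathbf{0}\}\subseteq\{\nabla J_\alpha=\mathbf{0}\}$, which does not convert that into $J^\ast-J(\boldsymbol{\theta}_T)\le\epsilon$.

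\textbf{Second gap: the factor $(1-\alpha)$.} The problem is your claim that ``the only quantity that could scale with $N$ is the variance.'' You take $\sigma_{\mathrm{DG}}^2=\sigma_{\mathrm{DG},\min}^2$, i.e.\ $\alpha=\alpha^\ast$, and then drop $(1-\alpha)$ from the contraction $1-(1-\alpha)\mu\eta$ as a constant. But $\alpha^\ast\to 1$ as $N$ grows. If you keep the factor, as the paper does, the recursion only certifies
\[
T\gtrsim\frac{L\,\sigma_{\mathrm{DG}}^2(\alpha)}{(1-\alpha)^2\mu^2\epsilon}\log\frac{2\Delta_0}{\epsilon}.
\]
Equation~\ref{eq:variance_combined} gives
\[
\frac{\sigma_{\mathrm{DG}}^2(\alpha)}{(1-\alpha)^2}=\sigma_J^2+\frac{\alpha^2}{(1-\alpha)^2}\sigma_{\mathcal{G}}^2-\frac{2\alpha}{1-\alpha}\rho\,\sigma_J\sigma_{\mathcal{G}}\ge\sigma_J^2
\]
whenever $\rho\le 0$. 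So the certified iteration count grows with $N$ for every $\alpha$.

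Copying the paper's bookkeeping does not close this. The paper declares $\alpha$ and $\sigma_{\mathrm{DG}}^2$ simultaneously $N$-independent, yet for fixed $\alpha$ one has $\sigma_{\mathrm{DG}}^2(\alpha)\ge(1-\alpha)^2\sigma_J^2$. The underlying issue is that crediting ascent on $J$ only through $(1-\alpha)\|\nabla J\|^2$ leaves the signal-to-noise ratio of the $J$-component untouched.

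An $N$-free bound needs the guidance term itself to contribute to ascent. One example is a strong alignment $-\langle\nabla J,\nabla\mathcal{G}\rangle\ge c\|\nabla J\|^2$ with $c>0$ independent of $N$. The contraction then becomes $1-\bigl((1-\alpha)+\alpha c\bigr)\mu\eta$. Its rate $(1-\alpha)+\alpha c\ge\min(1,c)$ stays bounded away from $0$ as $\alpha\to1$, while $\sigma_{\mathrm{DG}}^2(\alpha)\to\sigma_{\mathcal{G}}^2$.
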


This contrasts with standard multi-agent policy gradient methods, which require $T = \mathcal{O}(N/\epsilon)$ iterations to compensate for $\mathcal{O}(N)$ variance. We evaluate the scalability in Section~\ref{subsec:scalability}.

\begin{proof}[Proof Sketch]
$L$-smoothness bounds the per-iteration descent in terms of $\|\nabla J\|^2$ and the gradient variance $\sigma_{\mathrm{DG}}^2$. The PL condition converts $\|\nabla J\|^2$ into the optimality gap $J^* - J(\boldsymbol{\theta}_k)$. Substituting $\sigma_{\mathrm{DG}}^2 = \mathcal{O}(1)$ from Theorem~\ref{thm:variance} yields the final bound. The derivation is provided in Appendix~\ref{app:proof_complexity}.
\end{proof}

\textbf{Summary.} Theorems~\ref{thm:invariance}--\ref{thm:complexity} establish that DG-PG is (1) scalable: variance does not grow with $N$; (2) safe: optimal solutions are preserved; and (3) efficient: sample complexity is agent-independent.

\section{Experiments}
\label{sec:experiments}

We evaluate DG-PG on a heterogeneous cloud resource scheduling task with up to $N{=}1500$ agents, a challenging domain where queueing-theoretic priors are available and the cross-agent noise problem is particularly acute. Our experiments address three questions: (1) guidance-weight sensitivity; (2) controlled comparison with MAPPO and IPPO; and (3) scalability to large agent populations.

\subsection{Setup}
\label{subsec:setup}

\textbf{Environment.}
We evaluate DG-PG on a cloud scheduling simulator with $N$ servers based on AWS instance types and $N$ job dispatcher agents making placement decisions. The simulator combines heterogeneous server efficiency, heavy-tailed bimodal workloads, and non-stationary arrivals. Together, these factors create model mismatch by violating the stationarity assumptions behind the queueing-theoretic prior. The objective is to simultaneously reduce queueing delay and improve energy efficiency. Figure~\ref{fig:environment} summarizes the environment, and Appendix~\ref{app:env_spec} provides the full specifications.
\begin{figure}[!ht]
    \centering
    \includegraphics[width=\linewidth]{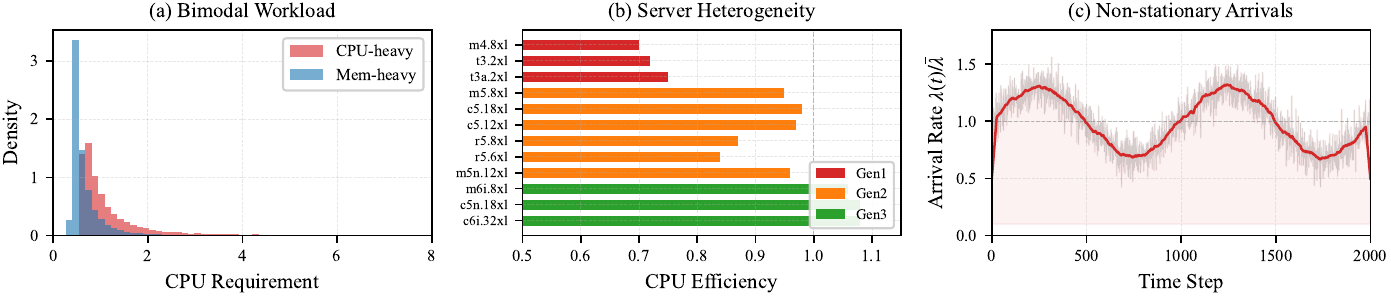}
    \caption{Environment characteristics. (a)~Bimodal workloads with heavy-tailed demands.
    (b)~Server heterogeneity across hardware generations.
    (c)~Non-stationary arrivals with tidal fluctuation.}
    \label{fig:environment}
\end{figure}

\textbf{Baselines.}
We compare against (1)~\textbf{MAPPO}~\citep{yu2022surprising}, multi-agent PPO with a centralized value function, and (2)~\textbf{IPPO}, independent PPO with local value functions. We additionally report (3)~\textbf{Best-Fit}, a heuristic algorithm that operates with centralized dispatcher, as an upper reference, and (4)~\textbf{Random} as a lower reference. Mean-Field methods \citep{yang2018mean} are excluded because they assume agent homogeneity. Training settings are detailed in Appendix~\ref{app:hyperparams}.

\subsection{Guidance Weight Selection}
\label{subsec:alpha}

This section evaluates how the guidance weight $\alpha$ affects convergence speed and final performance. We test four fixed values of $\alpha$ at both $N{=}20$ and $N{=}50$.

\begin{figure}[!ht]
    \centering
    \includegraphics[width=\linewidth]{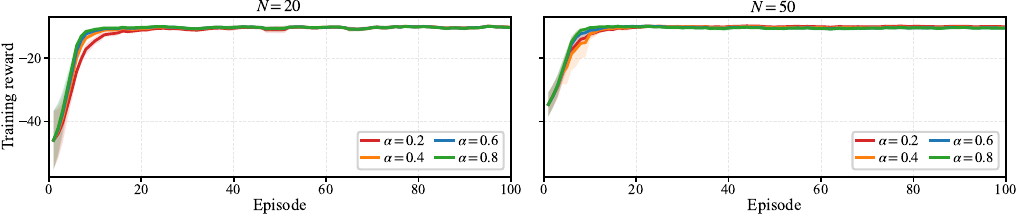}
    \caption{Training reward at $N{=}20$ and $N{=}50$ for varying guidance weight $\alpha$.}
    \label{fig:alpha_sensitivity}
\end{figure}

\paragraph{Results.}
Figure~\ref{fig:alpha_sensitivity} shows training curves at $N{=}20$ and $N{=}50$ for all tested $\alpha$ values. The results reveal a trade-off in which \textit{larger $\alpha$ accelerates early convergence, while smaller $\alpha$ achieves marginally superior best performance}. We evaluate the best checkpoint for each $\alpha$ on 10 held-out test seeds. At $N{=}50$, $\alpha{=}0.2$ achieves $-8.75 \pm 0.21$, slightly outperforming $\alpha{=}0.8$ at $-9.04 \pm 0.12$. The same pattern appears at $N{=}20$. Importantly, the small spread across tested $\alpha$ values also shows that DG-PG is \textit{robust to the choice of $\alpha$}.

\paragraph{Selected Hyperparameters.}
Based on the observed trade-off, we use a dynamic $\alpha$ schedule for scales within $N \leq 50$, starting from $\alpha{=}0.9$ for fast early convergence and decaying to $\alpha{=}0.2$ for later fine-tuning. For scales with $N \geq 100$, where cross-agent noise is more severe, we fix $\alpha{=}0.9$ to maintain strong guidance throughout training. Specific settings are provided in Appendix~\ref{app:hyperparams}.

\subsection{Controlled Comparison}
\label{subsec:main_results}

All three methods are trained under matched core hyperparameters (Appendix~\ref{app:hyperparams}). To give MAPPO and IPPO the best chance to learn, they receive $2.5\times$ more training than DG-PG (500 vs.\ 200 episodes) and tuned entropy coefficients. We evaluate $N \in \{2, 5, 10\}$, where gradient variance is lowest.

Figure~\ref{fig:learning_curves} shows the training curves for the controlled comparison. For quantitative comparison, we evaluate each method on the test set at its best checkpoint. At $N{=}2$, all methods are comparable. At $N{=}5$, DG-PG reaches $-10.97 \pm 0.38$, while MAPPO and IPPO reach $-21.63 \pm 0.77$ and $-23.04 \pm 0.29$. At $N{=}10$, the gap widens, with DG-PG reaching $-9.69 \pm 0.23$ compared with $-29.40 \pm 2.07$ for MAPPO and $-28.71 \pm 2.81$ for IPPO. These results show that the performance gap widens as $N$ increases, consistent with the cross-agent noise bottleneck.

\begin{figure}[!ht]
    \centering
    \includegraphics[width=\linewidth]{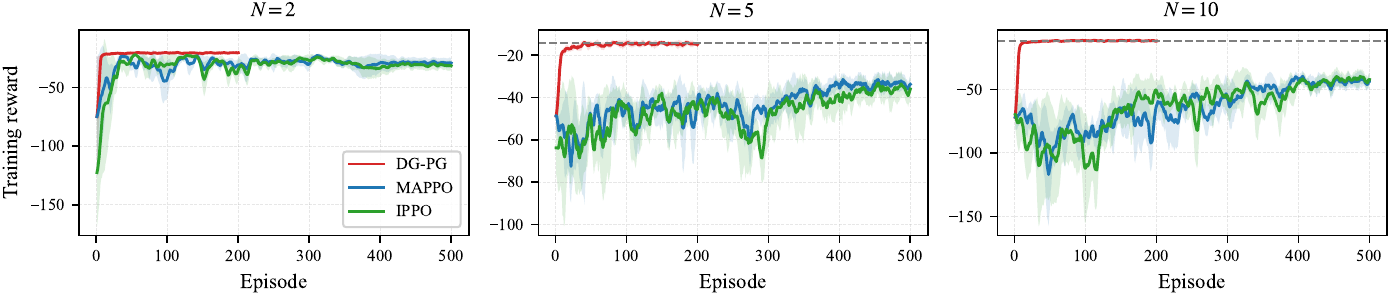}
    \caption{Training reward under the controlled comparison at $N \in \{2, 5, 10\}$.}
    \label{fig:learning_curves}
\end{figure}

\subsection{Scalability}
\label{subsec:scalability}

We next evaluate whether DG-PG scales up to $N{=}1500$ agents. MAPPO and IPPO are excluded as they fail to converge at theses scales despite extensive tuning (Appendix~\ref{app:nn_mappo_failure}).

\paragraph{Scale-Invariant Convergence.}
Figure~\ref{fig:scalability_combined}(a) shows DG-PG's large-scale training curves from $N{=}50$ to $N{=}1500$, with smaller scales omitted for visual clarity. Across these scales, DG-PG reaches within 10\% of its best observed reward in $6.8$--$17.7$ episodes on average across training seeds, and the required number of episodes remains bounded as $N$ increases. \textit{This scale-invariant convergence is the empirical counterpart of Theorem~\ref{thm:complexity}, confirming that DG-PG maintains $\mathcal{O}(1)$ sample complexity with respect to $N$.}

\begin{figure}[!ht]
    \centering
    \includegraphics[width=\linewidth]{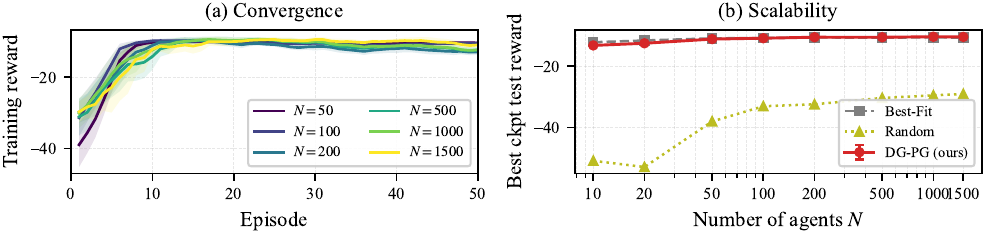}
    \caption{Scalability and convergence.}
    \label{fig:scalability_combined}
\end{figure}

\paragraph{Best Checkpoint Performance.}
Figure~\ref{fig:scalability_combined}(b) and Table~\ref{tab:main_results} report performance on the same 10 held-out test seeds, with DG-PG evaluated at its best checkpoint. As introduced above, Best-Fit relies on a \textit{centralized scheduler} to assign jobs, while DG-PG makes \textit{decentralized decisions} through learned agent policies. Nevertheless, DG-PG matches the centralized Best-Fit scheduler at smaller scales and surpasses it from $N{=}200$ onward, showing that decentralized decision-making can learn coordination comparable to centralized scheduling. \textit{This makes DG-PG suitable for distributed systems, where coordination must emerge from agent-level decisions rather than an online centralized scheduler. }

\begin{table}[!ht]
\centering
\caption{Performance on held-out test seeds across scales. DG-PG reports mean $\pm$ std over training seeds, and Best-Fit and Random are deterministic point estimates.}
\label{tab:main_results}
\vspace{0.5em}
\resizebox{\columnwidth}{!}{%
\setlength{\tabcolsep}{2.5pt}
\begin{tabular}{lcccccccc}
\toprule
& $N{=}10$ & $N{=}20$ & $N{=}50$ & $N{=}100$ & $N{=}200$ & $N{=}500$ & $N{=}1000$ & $N{=}1500$ \\
\midrule
DG-PG (ours) & $-13.19{\pm}0.14$ & $-12.49{\pm}0.10$ & $-11.13{\pm}0.03$ & $-10.84{\pm}0.02$ & $\mathbf{-10.51{\pm}0.11}$ & $\mathbf{-10.52{\pm}0.06}$ & $\mathbf{-10.36{\pm}0.08}$ & $\mathbf{-10.41{\pm}0.02}$ \\
Best-Fit & $\mathbf{-12.15}$ & $\mathbf{-11.58}$ & $\mathbf{-10.88}$ & $\mathbf{-10.77}$ & $-10.58$ & $-10.66$ & $-10.55$ & $-10.66$ \\
Random & $-50.82$ & $-52.85$ & $-37.98$ & $-33.04$ & $-32.40$ & $-30.32$ & $-29.49$ & $-29.01$ \\
\bottomrule
\end{tabular}%
}
\end{table}

\paragraph{Computational Cost.}
The episode-level convergence gain also translates into practical wall-clock cost. Under \textit{CPU-only execution (Apple M1)}, the per-episode training time ranges from $10.5$ seconds at $N{=}10$ to $335.0$ seconds at $N{=}1500$. However, because DG-PG reaches the convergence threshold within fewer than 18 episodes on average across all reported scales, the total wall-clock convergence time remains moderate. Specifically, it takes less than 10 minutes for $10 \leq N \leq 50$ ($2.9$--$5.0$ minutes) and less than 50 minutes for $100 \leq N \leq 1000$ ($16.2$--$42.1$ minutes). Even at the largest scale, $N{=}1500$, where each training episode is more than $30\times$ slower than at $N{=}10$, DG-PG reaches the convergence threshold in about $99$ minutes. This shows that DG-PG can scale to large cooperative MARL systems on commodity CPU-only hardware, without requiring GPUs or specialized training infrastructure. Detailed per-scale timing and breakdown are provided in Appendix~\ref{app:compute}.

\textit{These experiments further show that DG-PG learns a standalone decentralized policy, rather than a policy that depends on the analytical reference at execution time.} The analytical model is used only to construct gradient guidance during training. Once training is complete, each agent acts through its learned policy without guidance or an online centralized scheduler. This differs from residual policy methods, where the learned component remains tied to a base controller at deployment. The held-out results above show that this standalone policy internalizes effective load-balancing behavior and can match or exceed the centralized Best-Fit scheduler.

\section{Conclusion}

In this work, we identified cross-agent noise as the central bottleneck in scaling cooperative MARL. DG-PG addresses this bottleneck by deriving a per-agent descent signal from how each agent's action changes the deviation from an analytical reference state, then using this signal to guide policy-gradient updates. This guidance reduces per-agent gradient variance from $\mathcal{O}(N)$ to $\mathcal{O}(1)$ and yields agent-independent sample complexity, while preserving the Nash equilibria of the original cooperative game. The learned policy remains decentralized at execution time, with the analytical reference used only for training-time gradient guidance. On a heterogeneous cloud resource scheduling task with up to $N{=}1500$ agents, DG-PG shows scale-invariant convergence and maintains strong performance across all tested scales.

DG-PG shows that analytical models can be used not merely to provide heuristics or controllers, but to address cross-agent noise in cooperative MARL. It only requires a directionally aligned reference, rather than an exact optimum, an attainable target, or an executable policy. DG-PG offers a principled route to reliable distributed decision-making and efficient learning in large-scale engineered systems, such as cloud platforms, traffic networks, and power grids.

\textbf{Limitations and future work.} DG-PG requires an analytical reference that is directionally aligned with the cooperative objective. Domains without such structure will need learned or alternative guidance, and applying DG-PG to a new system requires specifying the system state, deriving the reference, and verifying its alignment. The current guidance weight $\alpha$ is selected empirically, so future work should adapt $\alpha$ from online estimates of gradient variance, guidance variance, and their correlation.

\bibliographystyle{plainnat}
\bibliography{bibliography}

\newpage
\appendix

\section{Comparison with Existing Methods}
\label{app:comparison_table}

Table~\ref{tab:method_comparison} provides a systematic comparison of DG-PG with representative methods from major paradigms in cooperative MARL. We evaluate methods across key dimensions, including gradient variance scaling, sample complexity, support for heterogeneous policies, parallel updates, architectural overhead, and domain knowledge requirements. The comparison highlights DG-PG's unique position as the only method that simultaneously achieves $\mathcal{O}(1)$ gradient variance, supports heterogeneous agent policies with parallel updates, and requires no additional learned components beyond a standard centralized critic.

\begin{sidewaystable}
    \centering
    \caption{Systematic comparison of cooperative MARL methods across key architectural and algorithmic dimensions. DG-PG is the only method that simultaneously achieves $\mathcal{O}(1)$ variance, $\mathcal{O}(1/\epsilon)$ sample complexity, full heterogeneity support, parallel updates, and zero extra learned components.}
    \label{tab:method_comparison}
    \vspace{0.5em}
    \footnotesize
    \begin{tabular}{@{}llcccccl@{}}
        \toprule
        \textbf{Method} & \textbf{Approach} & \makecell{\textbf{Variance}\\\textbf{Scaling}} & \makecell{\textbf{Sample}\\\textbf{Complexity}} & \makecell{\textbf{Heterog.}\\\textbf{Policies}} & \makecell{\textbf{Parallel}\\\textbf{Updates}} & \makecell{\textbf{Extra Learned}\\\textbf{Components}} & \makecell{\textbf{Assumption /}\\\textbf{Requirement}} \\
        \midrule
        \multicolumn{8}{l}{\textit{Policy Gradient Methods}} \\
        \quad MAPPO~\cite{yu2022surprising} & Centralized PG & $\mathcal{O}(N)$ & $\mathcal{O}(N/\epsilon)$ & \checkmark & \checkmark & None & None \\
        \quad IPPO~\cite{yu2022surprising} & Independent PG & $\mathcal{O}(N)$ & $\mathcal{O}(N/\epsilon)$ & \checkmark & \checkmark & None & None \\
        \quad HAPPO~\cite{kuba2021trust} & Sequential PG & $\mathcal{O}(1)^{\dagger}$ & $\mathcal{O}(N/\epsilon)^{\dagger}$ & \checkmark & \texttimes & None & None \\
        \quad COMA~\cite{foerster2018counterfactual} & Counterfactual & $\mathcal{O}(N)^{\ddagger}$ & $\mathcal{O}(N/\epsilon)$ & \checkmark & \checkmark & $N$ critic heads & None \\
        \midrule
        \multicolumn{8}{l}{\textit{Value Decomposition Methods}} \\
        \quad QMIX~\cite{rashid2020monotonic} & Mixing Network & N/A (Value) & --- & \checkmark & \checkmark & Mixing net & Monotonicity \\
        \quad VDN~\cite{sunehag2017value} & Additive & N/A (Value) & --- & \checkmark & \checkmark & None & Additivity \\
        \quad QPLEX~\cite{wang2020qplex} & Attention & N/A (Value) & --- & \checkmark & \checkmark & Attn mixer, $\mathcal{O}(N^2)$ & Dueling struct. \\
        \midrule
        \multicolumn{8}{l}{\textit{Mean-Field Approximation}} \\
        \quad Mean-Field~\cite{yang2018mean} & Neighbor Avg. & $\mathcal{O}(1)^{\S}$ & $\mathcal{O}(1/\epsilon)^{\S}$ & \texttimes & \checkmark & None & Exchangeability \\
        \midrule
        \multicolumn{8}{l}{\textit{Reward Modification Methods}} \\
        \quad PBRS~\cite{ng1999policy} & Potential Shaping & $\mathcal{O}(N)$ & $\mathcal{O}(N/\epsilon)$ & \checkmark & \checkmark & None & Potential func. \\
        \quad Diff.~Rewards~\cite{wolpert2001optimal} & Counterfactual & Problem-dep.$^{*}$ & Problem-dep.$^{*}$ & \checkmark & \checkmark & None & Exact simulator \\
        \midrule
        \multicolumn{8}{l}{\textit{\textbf{Gradient Guidance (Proposed)}}} \\
        \quad \textbf{DG-PG (Ours)} & \textbf{Analytic Guide} & $\boldsymbol{\mathcal{O}(1)}$ & $\boldsymbol{\mathcal{O}(1/\epsilon)}$ & \checkmark & \checkmark & \textbf{None} & \textbf{Analytical model} \\
        \bottomrule
    \end{tabular}
    \vspace{0.4em}

    \raggedright
    \footnotesize
    \textbf{Notes:} ``Variance Scaling'' = per-agent gradient variance w.r.t.\ $N$; ``Sample Complexity'' = number of iterations to reach $\epsilon$-optimality (proven bounds for PG methods); ``Extra Learned Components'' = trainable modules beyond standard actor-critic; ``---'' = not directly comparable (value-based methods lack explicit PG complexity bounds). \\[0.2em]
    $^{\dagger}$HAPPO reduces per-agent variance via sequential updates with trust-region constraints, but sacrifices parallelism. Consequently, wall-clock time per iteration scales as $\mathcal{O}(N)$, yielding effective sample complexity $\mathcal{O}(N/\epsilon)$ in wall-clock time. \\
    $^{\ddagger}$COMA requires $N$ separate critic heads to compute counterfactual baselines. Reduces own-action variance $\sigma_{\text{self}}^2$ but does not suppress cross-agent noise $(N-1)\sigma_{\text{others}}^2$. \\
    $^{*}$Difference Rewards reduce variance by subtracting a counterfactual baseline $r(s, \boldsymbol{a}) - r(s, \boldsymbol{a}^{-i})$, but the degree of reduction depends on the reward structure (separability). For non-separable rewards, cross-agent interaction terms persist. Computing the counterfactual also requires an exact simulator, which is often intractable at scale. \\
    $^{\S}$Mean-Field methods assume homogeneous agents (identical and exchangeable). Results hold only under this approximation, and convergence is to the mean-field equilibrium, which may differ from the Nash equilibrium of the original $N$-agent game.
\end{sidewaystable}

\subsection*{Key Insights from the Comparison}

\textbf{1. Every existing method sacrifices at least one critical property.}
For large-scale cooperative MARL, we identify five desirable properties: $\mathcal{O}(1)$ variance, $\mathcal{O}(1/\epsilon)$ sample complexity, heterogeneous policies, parallel updates, and zero extra learned components. No prior method achieves all five:
\begin{itemize}
    \item \textbf{MAPPO / IPPO}: $\mathcal{O}(N)$ variance $\Rightarrow$ $\mathcal{O}(N/\epsilon)$ sample complexity.
    \item \textbf{HAPPO}: sequential updates $\Rightarrow$ $\mathcal{O}(N)$ wall-clock cost per iteration.
    \item \textbf{Mean-Field}: requires agent exchangeability $\Rightarrow$ no heterogeneous policies.
    \item \textbf{QMIX / QPLEX}: extra mixing networks or $\mathcal{O}(N^2)$ attention layers.
    \item \textbf{Diff.\ Rewards}: requires an exact simulator, often intractable at scale.
\end{itemize}
DG-PG satisfies all five simultaneously (Theorems~\ref{thm:variance}--\ref{thm:complexity}), with only a minimal modification to the advantage computation in standard policy gradient algorithm such as MAPPO.

\textbf{2. Leveraging analytical models without sacrificing optimality.}
All other methods in Table~\ref{tab:method_comparison} are purely model-free, either imposing no assumptions at all (MAPPO, IPPO, HAPPO, COMA) or encoding structural constraints into the architecture (QMIX, VDN, Mean-Field). DG-PG is unique in leveraging an \textit{external analytical model} to reduce variance while remaining model-free in its convergence guarantees. 

DG-PG's gradient-level use of analytical models also distinguishes it from other knowledge-informed learning paradigms: Physics-Informed Neural Networks~\citep{raissi2019physics} embed governing equations directly into the loss, requiring exact PDEs; Residual RL~\citep{johannink2019residual} uses a hand-designed controller as a permanent base policy, inheriting its biases; Guided Policy Search~\citep{levine2013guided} relies on trajectory optimization with known dynamics. These methods all incorporate domain knowledge into the objective or policy. In contrast, DG-PG derives the gradient of a guidance term from the analytical model to steer policy updates, leaving the original objective $J$ unchanged. As a result, model inaccuracies affect only variance, not the Nash equilibria of the original game (Theorem~\ref{thm:invariance}). The adjustable $\alpha$ further allows practitioners to tune the degree of reliance based on model fidelity.

\section{Theoretical Assumptions and Dependencies}
\label{app:assumptions_table}

For reviewer convenience, Table~\ref{tab:assumptions_deps} summarizes the dependency structure between our theoretical results and the assumptions they rely upon.

\begin{table}[h]
    \centering
    \caption{Dependency of theoretical results on assumptions. ``\checkmark'' indicates that the result requires the corresponding assumption. All assumptions are stated in the main text (Section~\ref{subsec:prior_assumptions}) and Appendix~\ref{app:proof_complexity}.}
    \label{tab:assumptions_deps}
    \vspace{0.5em}
    \small
    \begin{tabularx}{\columnwidth}{lccccX}
        \toprule
        \textbf{Result} & \makecell{\textbf{Exogeneity}\\\textbf{(Asm.~\ref{ass:exogeneity})}} & \makecell{\textbf{Alignment}\\\textbf{(Asm.~\ref{ass:alignment})}} & \makecell{\textbf{Smoothness}\\\textbf{\& PL}\\\textbf{(Asm.~\ref{ass:smoothness_pl})}} & \makecell{\textbf{Bounded}\\\textbf{Variance}} & \textbf{Role of Assumptions} \\
        \midrule
        \makecell[l]{Thm.~\ref{thm:invariance}\\(Nash Invariance)} & \checkmark & \checkmark & & & Exogeneity ensures reference does not shift; Alignment ensures guidance gradient vanishes at Nash equilibria. \\
        \midrule
        \makecell[l]{Thm.~\ref{thm:variance}\\(Variance Bound)} & \checkmark & \checkmark & & & Exogeneity makes reference deterministic; Alignment ensures negative correlation $\rho < 0$. \\
        \midrule
        \makecell[l]{Thm.~\ref{thm:complexity}\\(Sample Complexity)} & \checkmark & \checkmark & \checkmark & \checkmark & Smoothness enables the per-iteration bound; PL condition yields linear convergence rate; $\mathcal{O}(1)$ variance from Thm.~\ref{thm:variance}. \\
        \midrule
        \makecell[l]{Prop.~1 in App.~\ref{app:verify_alignment}\\(Cloud Verification)} & & & & & Standalone verification that Alignment holds for cloud scheduling via Cauchy-Schwarz on the load imbalance functional. \\
        \bottomrule
    \end{tabularx}
\end{table}

\textbf{Discussion of Assumption Strength.}
\begin{itemize}
    \item \textbf{Assumption~\ref{ass:exogeneity} (Exogeneity)} is a design choice rather than a restrictive condition. It is satisfied by construction when the reference is computed from exogenous system parameters (e.g., server capacities, total workload) and treated as a fixed target during gradient computation via the stop-gradient operator $\bot(\cdot)$.
    \item \textbf{Assumption~\ref{ass:alignment} (Descent-Aligned Reference)} is the core structural requirement. It holds when the analytical model captures the dominant performance drivers of the cooperative objective. We verify this rigorously for cloud scheduling in Appendix~\ref{app:verify_alignment} and outline verification templates for traffic, communication, and power grid domains.
    \item \textbf{Assumption~\ref{ass:smoothness_pl} (Smoothness \& PL)} is standard in the stochastic optimization literature and is required only for the convergence rate result (Theorem~\ref{thm:complexity}). The variance reduction (Theorem~\ref{thm:variance}) and Nash invariance (Theorem~\ref{thm:invariance}) hold without these regularity conditions.
\end{itemize}

\section{Case Study: Verification of Structural Assumptions}
\label{app:verification_cloud}

In this appendix, we provide a detailed verification that the structural Assumptions \ref{ass:exogeneity} and \ref{ass:alignment} hold for the cloud resource scheduling task used in our experiments. This serves as a representative case study demonstrating how domain-specific priors can be systematically validated within the DG-PG framework.

\textbf{Applicability to Other Domains.}
While we focus on cloud scheduling, similar verification procedures apply to other control problems where analytical priors are available:
\begin{itemize}
    \item \textit{Communication networks}: Network utility maximization theory provides proportional fairness as a natural reference allocation \citep{kelly1998rate}.
    \item \textit{Power grids}: DC optimal power flow and economic dispatch yield reference generation schedules \citep{wood2013power}.
    \item \textit{Traffic systems}: Wardrop equilibrium and system-optimal routing define reference flow distributions \citep{wardrop1952road}.
\end{itemize}
In each case, the key requirement is that the prior defines a descent direction aligned with the cooperative objective, a property we formally verify below for the cloud scheduling domain using fluid-limit analysis from queueing theory \citep{kurtz1981approximation}.

\subsection{System Model and Objective Approximation}

\textbf{System Setup.}
We study a cooperative multi-agent system for task scheduling in a heterogeneous cluster of \(N\) servers. Tasks are characterized by multi-dimensional resource requirements (e.g., CPU, memory), and servers possess heterogeneous capacities across these dimensions. Let $K$ denote the number of resource types. We model the system state using the full resource utilization vector.

Specifically, let \(\boldsymbol{x}^i_t = (x^{i,1}_t, \dots, x^{i,K}_t)^\top \in \mathbb{R}_{\geq 0}^K\) denote the \textit{resource utilization vector} on server \(i\) at time \(t\), where $x^{i,k}_t$ is the load on the $k$-th resource dimension. The server capacities are given by \(\boldsymbol{\mu}^i = (\mu^{i,1}, \dots, \mu^{i,K})^\top\), where $\mu^{i,k}$ is the maximum processing rate of server $i$ on resource dimension $k$. This generalization captures the coupling between resources, avoiding the information loss associated with scalar bottleneck abstractions.

The total workload pressure in the system on dimension $k$, denoted as \(C^k\), is conserved under task assignment decisions:
\begin{equation*}
    \sum_{i=1}^N x^{i,k}_t = C^k, \quad x^{i,k}_t \geq 0, \; \forall t, \forall k.
\end{equation*}

\textbf{Surrogate Objective via Queueing Theory.}
To construct the reference state $\tilde{\boldsymbol{X}}$, we need an analytically tractable objective whose minimizer serves as the guidance target. The cooperative MARL objective $J$ itself is intractable because it aggregates queueing delay, energy efficiency, and fairness through complex, non-separable interactions. However, queueing theory reveals that improving each of these criteria reduces to a common structural condition:
\begin{itemize}
    \item \textbf{Queueing delay} is a convex function of server utilization, scaling as $1/(\mu - x)$ near capacity \citep{bertsekas2021data}. By Jensen's inequality, total delay is minimized when utilization is equalized across servers.
    \item \textbf{Energy cost} in heterogeneous clusters with varying $\mu^{i,k}$ is minimized by capacity-proportional allocation, which avoids overloading low-efficiency machines.
    \item \textbf{Fairness} improves as utilization variance decreases, reducing service-level violations on congested servers.
\end{itemize}
All three criteria improve when server utilization is balanced proportionally to capacity. We formalize this shared structure as the \textit{load imbalance functional}:
\begin{equation}
    \mathcal{J}_{\text{load}}(\boldsymbol{X}) \triangleq \sum_{k=1}^K \sum_{i=1}^N \frac{1}{\mu^{i,k}} (x^{i,k})^2,
    \label{eq:surrogate_objective}
\end{equation}
where $\boldsymbol{X} = \{\boldsymbol{x}^1, \dots, \boldsymbol{x}^N\}$ is the global system state. The gradient $\frac{\partial \mathcal{J}_{\text{load}}}{\partial x^{i,k}} = \frac{2x^{i,k}}{\mu^{i,k}}$ is proportional to the utilization rate of resource $k$ on server $i$, so minimizing $\mathcal{J}_{\text{load}}$ drives the system toward capacity-proportional allocation. Crucially, $\mathcal{J}_{\text{load}}$ is convex and separable across resource dimensions, admitting a closed-form minimizer. This tractability also enables rigorous verification of Assumption~\ref{ass:alignment}, which would be intractable for the composite objective $J$ directly.

\textbf{Optimal Reference via Load Balancing.}
To construct the reference state $\tilde{\boldsymbol{X}}$, we solve the constrained optimization problem of minimizing $\mathcal{J}_{\text{load}}$ under workload conservation. For each resource dimension $k$, this reduces to:
\begin{equation*}
    \min_{\boldsymbol{X}} \sum_{i=1}^N \frac{1}{\mu^{i,k}} (x^{i,k})^2 \quad \text{subject to} \quad \sum_{i=1}^N x^{i,k} = C^k.
\end{equation*}
This is a classical convex optimization problem whose solution is well-established in queueing theory and resource allocation \citep{bertsekas2021data, ghodsi2011dominant}. Constructing the Lagrangian for dimension $k$:
\begin{equation*}
    \mathcal{L}_{\text{Lag}}^k(\boldsymbol{x}^{\cdot,k}, \lambda_k) = \sum_{i=1}^N \frac{1}{\mu^{i,k}} (x^{i,k})^2 - \lambda_k \left( \sum_{i=1}^N x^{i,k} - C^k \right),
\end{equation*}
and taking the first-order optimality condition:
\begin{equation*}
    \frac{\partial \mathcal{L}_{\text{Lag}}^k}{\partial x^{i,k}} = \frac{2x^{i,k}}{\mu^{i,k}} - \lambda_k = 0 \quad \Longrightarrow \quad x^{i,k} = \frac{\lambda_k \mu^{i,k}}{2}.
\end{equation*}
This shows that at the optimum, load is allocated proportionally to capacity: $x^{i,k} \propto \mu^{i,k}$. Applying the workload constraint $\sum_i x^{i,k} = C^k$ yields:
\begin{equation*}
    x^{*,i,k} = \frac{\mu^{i,k}}{\sum_j \mu^{j,k}} \cdot C^k.
\end{equation*}
Equivalently, the optimal state satisfies the \textit{utilization balancing condition}, i.e., the utilization rate $x^{i,k}/\mu^{i,k}$ is equalized across all servers for each resource type $k$:
\begin{equation*}
    \frac{x^{1,k}}{\mu^{1,k}} = \frac{x^{2,k}}{\mu^{2,k}} = \cdots = \frac{x^{N,k}}{\mu^{N,k}}, \quad \forall k \in \{1, \dots, K\}.
\end{equation*}
This principle is fundamental to heterogeneous resource allocation and aligns with dominant resource fairness (DRF) in multi-dimensional scheduling systems.

\textbf{Theory-Guided Reference State.}
Based on the optimal solution derived above, we define the \textit{capacity-weighted load-balancing equilibrium} as our theory-guided reference:
\begin{equation}
    \tilde{x}^{i,k}_t = \frac{\mu^{i,k}}{\sum_{j=1}^N \mu^{j,k}} \cdot C^k, \quad \forall i, k.
    \label{eq:prior_definition}
\end{equation}
By construction, \(\tilde{\boldsymbol{X}}_t = \{\tilde{\boldsymbol{x}}^1_t, \dots, \tilde{\boldsymbol{x}}^N_t\}\) is the unique global minimizer of \(\mathcal{J}_{\text{load}}(\boldsymbol{X})\) under the workload constraints. Note that \(\tilde{\boldsymbol{X}}_t\) is derived from a continuous (fluid) optimization and may not be exactly reachable under discrete task assignments; however, this does not affect the verification below. We now verify that this reference satisfies Assumptions~\ref{ass:exogeneity} and~\ref{ass:alignment}. Specifically, in Appendix~\ref{app:verify_alignment}, we prove that \((\tilde{\boldsymbol{X}} - \boldsymbol{X})\) is aligned with the descent direction of \(\mathcal{J}_{\text{load}}\), guaranteeing that moving toward the reference reduces load imbalance.

\subsection{Verification of Assumption \ref{ass:exogeneity} (Exogeneity)}

The reference state $\tilde{\boldsymbol{X}}_t$ is computed deterministically from the current aggregate system workload $C^k$ but is treated as a fixed constant during the policy gradient computation step. Specifically, for each server $i$ and resource $k$, we define:
\begin{equation*}
    \tilde{x}^{i,k}_t := \bot \left( \frac{\mu^{i,k}}{\sum_{j=1}^N \mu^{j,k}} \cdot C^k \right).
\end{equation*}
Here, $\bot$ denotes the stop-gradient operator. Since $\tilde{\boldsymbol{X}}_t$ depends only on the current state information (total load) and not on the immediate action $a_t$ selected by the policy $\pi_\theta$, we have $\nabla_{\theta} \tilde{\boldsymbol{X}}_t = \mathbf{0}$, ensuring that no gradient flows back through the reference target. This satisfies Assumption \ref{ass:exogeneity}.

\subsection{Verification of Assumption \ref{ass:alignment} (Directional Alignment)}
\label{app:verify_alignment}

Assumption~\ref{ass:alignment} requires that the direction \((\tilde{\boldsymbol{X}} - \boldsymbol{X})\) is a descent direction for the cooperative objective $J$. Since $\mathcal{J}_{\text{load}}$ is a surrogate for $J$ justified by queueing theory (Section~\ref{app:verification_cloud}), it suffices to show that \((\tilde{\boldsymbol{X}} - \boldsymbol{X})\) is a descent direction for $\mathcal{J}_{\text{load}}$.

\begin{proposition}[Alignment with Load Imbalance Reduction]
    For any feasible state $\boldsymbol{X} \neq \tilde{\boldsymbol{X}}$ satisfying the workload constraints, the direction \((\tilde{\boldsymbol{X}} - \boldsymbol{X})\) is aligned with the descent direction of the load imbalance functional:
    \begin{equation*}
        \langle \nabla_{\boldsymbol{X}} \mathcal{J}_{\text{load}}(\boldsymbol{X}), \tilde{\boldsymbol{X}} - \boldsymbol{X} \rangle < 0.
    \end{equation*}
    That is, moving towards $\tilde{\boldsymbol{X}}$ reduces load imbalance.
\end{proposition}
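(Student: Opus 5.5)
The plan is to exploit the convexity and separability of $\mathcal{J}_{\text{load}}$ across resource dimensions, and show that the inner product collapses to a negative weighted sum of squares.

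\textbf{Step 1: reduce to one dimension.} Since $\mathcal{J}_{\text{load}}$ is a sum over $k$ of functions of the column $\boldsymbol{x}^{\cdot,k}$, the inner product splits as
\begin{equation*}
    \langle \nabla_{\boldsymbol{X}} \mathcal{J}_{\text{load}}(\boldsymbol{X}), \tilde{\boldsymbol{X}} - \boldsymbol{X} \rangle = \sum_{k=1}^K \sum_{i=1}^N \frac{2x^{i,k}}{\mu^{i,k}}\,(\tilde{x}^{i,k} - x^{i,k}).
\end{equation*}
It therefore suffices to show that each $k$-term is $\le 0$, and that the inequality is strict for at least one $k$ whenever $\boldsymbol{X} \neq \tilde{\boldsymbol{X}}$.

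\textbf{Step 2: fix $k$ and recenter at the reference.} Write $u^i = x^{i,k}/\mu^{i,k}$ for the utilization and $\bar u = C^k/\sum_j \mu^{j,k}$ for the common utilization of the reference, so that $\tilde{x}^{i,k} = \bar u\,\mu^{i,k}$. Set $\delta^i = x^{i,k} - \tilde{x}^{i,k}$. Both $\boldsymbol{X}$ and $\tilde{\boldsymbol{X}}$ satisfy workload conservation, so $\sum_i \delta^i = 0$. Using $x^{i,k}/\mu^{i,k} = \bar u + \delta^i/\mu^{i,k}$, the $k$-term becomes
\begin{equation*}
    -2\sum_i \Bigl(\bar u + \frac{\delta^i}{\mu^{i,k}}\Bigr)\delta^i = -2\bar u \sum_i \delta^i - 2\sum_i \frac{(\delta^i)^2}{\mu^{i,k}} = -2\sum_i \frac{(\delta^i)^2}{\mu^{i,k}}.
\end{equation*}
The linear term vanishes by conservation, which is exactly the Lagrange stationarity of $\tilde{\boldsymbol{X}}$. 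The remaining term is $\le 0$ because $\mu^{i,k} > 0$.

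Equivalently, one could argue via Cauchy–Schwarz, as the appendix table suggests. Conservation gives $\sum_i \mu^{i,k} u^i = \bar u \sum_i \mu^{i,k}$, and then $\sum_i \mu^{i,k}(u^i)^2 \ge \bar u^2\sum_i\mu^{i,k}$, which yields the same sign. The direct expansion is cleaner, so I would use that.

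\textbf{Step 3: strictness.} If $\boldsymbol{X} \neq \tilde{\boldsymbol{X}}$, some $\delta^{i,k} \neq 0$, so that $k$-term is strictly negative. Summing over $k$ gives the strict inequality.

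\textbf{Main obstacle.} The only delicate point is making sure the cross term $\bar u\sum_i\delta^i$ really vanishes. This requires that $\boldsymbol{X}$ and $\tilde{\boldsymbol{X}}$ share the same $C^k$, i.e., that $\boldsymbol{X}$ is feasible and the reference is computed from the current total load. Both are stated in the hypothesis and in the definition in \eqref{eq:prior_definition}, so I would state this explicitly rather than leave it implicit. Strict positivity of the capacities $\mu^{i,k}$ is also needed and should be noted as a standing assumption.
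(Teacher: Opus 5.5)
Your proof is correct, but it differs from the paper's at the key step. The paper also splits the inner product into per-dimension terms $I_k$ and uses the constant reference utilization $\tilde{x}^{i,k}/\mu^{i,k} = C^k/\sum_j \mu^{j,k}$. It then rewrites $I_k = 2(C^k)^2/\sum_j \mu^{j,k} - 2\sum_i (x^{i,k})^2/\mu^{i,k}$ and applies Cauchy--Schwarz to $u_i = x^{i,k}/\sqrt{\mu^{i,k}}$ and $v_i = \sqrt{\mu^{i,k}}$, which is exactly the alternative you sketch. Your recentering instead gives the exact identity $I_k = -2\sum_i (x^{i,k}-\tilde{x}^{i,k})^2/\mu^{i,k}$, so it names the Cauchy--Schwarz slack explicitly as a $1/\mu$-weighted squared distance to the reference. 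This buys you two things. First, strictness is immediate. The paper only asserts strict inequality ``if the system is unbalanced'' and never works out the equality case of Cauchy--Schwarz, which would require showing that $x^{i,k} \propto \mu^{i,k}$ together with conservation forces $\boldsymbol{X} = \tilde{\boldsymbol{X}}$. Second, your argument visibly uses only Lagrange stationarity of $\tilde{\boldsymbol{X}}$ (so the linear cross term vanishes) plus a positive-definite quadratic. It therefore carries over verbatim to any strictly convex quadratic surrogate with linear equality constraints. Both forms equal $-2\bigl(\mathcal{J}_{\text{load}}(\boldsymbol{X}) - \mathcal{J}_{\text{load}}(\tilde{\boldsymbol{X}})\bigr)$. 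The paper's version reads naturally as ``actual imbalance minus minimal imbalance,'' while yours gives the sharper statement that the inner product equals exactly $-2$ times the weighted squared distance to the reference.
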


\begin{proof}
    Using the definition of the load imbalance functional in Equation~\ref{eq:surrogate_objective}, we compute the inner product between its gradient and the guidance vector. Since $\mathcal{J}_{\text{load}}$ is separable across resource dimensions, the total inner product decomposes as a sum over resource types $k$:
    \begin{align*}
        \langle \nabla_{\boldsymbol{X}} \mathcal{J}_{\text{load}}, \tilde{\boldsymbol{X}} - \boldsymbol{X} \rangle
         & = \sum_{k=1}^K \sum_{i=1}^N \frac{\partial \mathcal{J}_{\text{load}}}{\partial x^{i,k}} (\tilde{x}^{i,k} - x^{i,k}) \\
         & = \sum_{k=1}^K \underbrace{\sum_{i=1}^N \frac{2x^{i,k}}{\mu^{i,k}} (\tilde{x}^{i,k} - x^{i,k})}_{I_k}.
    \end{align*}

    To establish that the total inner product is negative, it suffices to show that each component $I_k \leq 0$. We analyze a fixed resource dimension $k$. First, observe that by the equilibrium definition in Equation~\ref{eq:prior_definition}, the reference utilization rate is constant across all servers:
    \begin{equation*}
        \frac{\tilde{x}^{i,k}}{\mu^{i,k}} = \frac{C^k}{\sum_{j=1}^N \mu^{j,k}}.
    \end{equation*}
    Substituting this constant ratio into the expression for $I_k$:
    \begin{align*}
        I_k & = 2 \sum_{i=1}^N x^{i,k} \left( \frac{\tilde{x}^{i,k}}{\mu_i^k} - \frac{x^{i,k}}{\mu_i^k} \right) \\
            & = 2 \left( \frac{C^k}{\sum_{j=1}^N \mu_j^k} \right) \sum_{i=1}^N x^{i,k} - 2 \sum_{i=1}^N \frac{(x^{i,k})^2}{\mu^{i,k}} \\
            & = \frac{2 (C^k)^2}{\sum_{j=1}^N \mu^{j,k}} - 2 \sum_{i=1}^N \frac{(x^{i,k})^2}{\mu^{i,k}}.
    \end{align*}
    To determine the sign of $I_k$, we apply the Cauchy-Schwarz inequality to vectors $\boldsymbol{u} = (\frac{x^{i,k}}{\sqrt{\mu^{i,k}}})_i$ and $\boldsymbol{v} = (\sqrt{\mu^{i,k}})_i$:
    \begin{equation*}
        \left( \sum_{i=1}^N u_i v_i \right)^2 \leq \left( \sum_{i=1}^N u_i^2 \right) \left( \sum_{i=1}^N v_i^2 \right).
    \end{equation*}
    Substituting the definitions of $u_i$ and $v_i$:
    \begin{equation*}
        \left( \sum_{i=1}^N x^{i,k} \right)^2 \leq \left( \sum_{i=1}^N \frac{(x^{i,k})^2}{\mu^{i,k}} \right) \left( \sum_{i=1}^N \mu^{i,k} \right).
    \end{equation*}
    Rearranging terms and using $\sum x^{i,k} = C^k$:
    \begin{equation*}
        \sum_{i=1}^N \frac{(x^{i,k})^2}{\mu^{i,k}} \geq \frac{(C^k)^2}{\sum_{j=1}^N \mu^{j,k}}.
    \end{equation*}
    Comparing this with the expression for $I_k$, we see that the second term (actual variance) is always greater than or equal to the first term (minimal variance). Thus, $I_k \leq 0$ for all $k$, with strict inequality if the system is unbalanced. Summing over all dimensions implies:
    \begin{equation*}
        \langle \nabla_{\boldsymbol{X}} \mathcal{J}_{\text{load}}(\boldsymbol{X}), \tilde{\boldsymbol{X}} - \boldsymbol{X} \rangle = \sum_{k=1}^K I_k < 0.
    \end{equation*}
    This confirms that the direction toward the reference $\tilde{\boldsymbol{X}}$ is \textit{aligned with the descent direction} of the load imbalance functional, thereby verifying Assumption~\ref{ass:alignment}.
\end{proof}

\textbf{Generalization to Other Domains.}
The verification procedure demonstrated above serves as a \textit{template} for applying DG-PG to other systems. As outlined in the main text, domains such as communication networks (proportional fairness), power grids (economic dispatch), and traffic systems (Wardrop equilibrium) all admit analytical priors with similar structural properties. In each case, the key steps are: (1) identify a domain-specific performance metric with known convexity properties; (2) derive the reference state as its constrained optimum; (3) verify that the reference defines a descent direction using tools from convex analysis or variational inequalities. This case study establishes the feasibility and rigor of such verification, providing a blueprint for extending DG-PG beyond cloud scheduling.

\textbf{Robustness to Approximate Satisfaction.}
In practice, the structural assumptions may not hold exactly. For instance, the fluid-limit approximation underlying the reference state assumes Poisson arrivals and homogeneous service times, which are often violated in real systems due to heavy-tailed workloads and temporal fluctuations. However, DG-PG remains effective as long as the assumptions hold \textit{approximately}, that is, the reference captures the dominant performance drivers even if the analytical model is idealized. The adjustable guidance weight $\alpha$ provides a natural mechanism for handling model inaccuracies. When the prior is less accurate, reducing $\alpha$ diminishes its influence while preserving the variance reduction benefits. Our experiments (Section~\ref{subsec:alpha}) demonstrate that DG-PG is robust across a range of $\alpha$ values, confirming that directional correctness, rather than exact model fidelity, is sufficient for effective guidance.

\section{Derivation of Local Influence Vector}
\label{app:influence_derivation}

In this appendix, we derive the local influence vector $\boldsymbol{z}_t^i = \frac{\partial \boldsymbol{x}_t}{\partial a^i_t}$ for the cloud resource scheduling example. Here $\boldsymbol{z}_t^i$ captures the marginal effect of agent $i$'s action on the system state, i.e., how choosing a particular server changes the global utilization vector. This derivation generalizes to other domains with similar additive structure.

We use the notation established in Appendix~\ref{app:verification_cloud}: $N$ servers, $K$ resource dimensions, global utilization vector $\boldsymbol{x}_t \in \mathbb{R}^{NK}$, and per-server utilization $\boldsymbol{x}_t^j \in \mathbb{R}^K$. Each agent $i$ holds a task with resource requirement vector $\boldsymbol{w}^i = (w^{i,1}, \dots, w^{i,K})^\top \in \mathbb{R}^K$, where $w^{i,k}$ is the demand on resource dimension $k$ (e.g., CPU cores, memory). Each agent $i$ selects a target server $a_t^i \in \{1, \dots, N\}$, and the next-step utilization on server $j$ is determined by the joint actions of all agents:
\begin{equation*}
    \boldsymbol{x}_{t+1}^j = \boldsymbol{x}_t^j + \sum_{l=1}^N \boldsymbol{w}^l \cdot \mathbb{I}(a_t^l = j).
\end{equation*}
To compute the marginal effect of agent $i$'s action, we differentiate with respect to $a_t^i$. Since the contributions of all other agents $l \neq i$ are held fixed, their terms vanish, leaving only agent $i$'s contribution:
\begin{equation*}
    \boldsymbol{z}_t^i \triangleq \frac{\partial \boldsymbol{x}_{t+1}}{\partial a_t^i} = (\mathbf{0}, \dots, \underbrace{\boldsymbol{w}^i}_{\text{server } j}, \dots, \mathbf{0})^\top.
\end{equation*}
That is, assigning task $i$ to server $j$ increases server $j$'s load by exactly $\boldsymbol{w}^i$ across all resource types, while leaving all other servers unaffected.

\textbf{Projected coefficient.} In the DG-PG update (Equations~\ref{eq:guidance_coefficient}--\ref{eq:dg_estimator}), each agent's guidance signal is scaled by the inner product $\langle \boldsymbol{x}_t - \tilde{\boldsymbol{x}}_t, \boldsymbol{z}_t^i \rangle$, which measures how much agent $i$'s action contributes to the change in deviation from the reference. Substituting the influence vector $\boldsymbol{z}_t^i$ yields:
\begin{equation*}
    \langle \boldsymbol{x}_t - \tilde{\boldsymbol{x}}_t, \boldsymbol{z}_t^i \rangle = \sum_{k=1}^K w^{i,k} \cdot (x_t^{j,k} - \tilde{x}_t^{j,k}).
\end{equation*}
Each term measures how overloaded server $j$ is on resource $k$, weighted by the task's own demand $w^{i,k}$, so resource types that the task consumes most heavily dominate the signal. Importantly, this coefficient provides a \textit{direction} for policy improvement, not a hard constraint that forces convergence to $\tilde{\boldsymbol{X}}$. The actual system trajectory is determined by the learned policy interacting with the environment.

\subsection{General Applicability}

The key requirement is that the influence vector $\boldsymbol{z}_t^i$ captures the marginal effect of an action on the system state. This structure arises naturally in many multi-agent systems:
\begin{itemize}
    \item \textbf{Traffic routing}: $\boldsymbol{x}_t$ represents link flows and actions are path selections, so $\boldsymbol{z}_t^i$ indicates which links are used.
    \item \textbf{Inventory management}: $\boldsymbol{x}_t$ represents stock levels and actions are replenishment orders, where $\boldsymbol{z}_t^i$ adds to specific stock dimensions.
    \item \textbf{Communication networks}: $\boldsymbol{x}_t$ represents channel interference levels and actions are power allocations, and $\boldsymbol{z}_t^i$ captures each agent's marginal contribution to interference.
\end{itemize}
In each case, $\boldsymbol{z}_t^i$ enables the projection-based variance reduction by isolating the relevant components of the global state deviation.

\section{Detailed Proof of Nash Invariance (Theorem~\ref{thm:invariance})}
\label{app:proof_invariance}

In this appendix, we provide the detailed proof for Theorem~\ref{thm:invariance}, establishing that our DG-PG framework preserves the Nash equilibria of the original cooperative game.

\textbf{Theorem~\ref{thm:invariance} (Nash Invariance).} \textit{Let $\boldsymbol{\theta}^*$ be a Nash equilibrium of the original cooperative game, i.e., $\nabla_{\theta^i} J(\boldsymbol{\theta}^*) = \mathbf{0}$ for all $i \in \mathcal{N}$. Under Assumptions~\ref{ass:exogeneity}--\ref{ass:alignment}, the DG-PG gradient also vanishes at $\boldsymbol{\theta}^*$, i.e., $\nabla_{\theta^i} J_{\alpha}(\boldsymbol{\theta}^*) = \mathbf{0}$ for any $\alpha \in (0, 1)$ and all $i \in \mathcal{N}$.}

To prove this theorem, we first establish a key lemma stating that the gradient of the guidance term vanishes at any Nash equilibrium of the original game.

\subsection{Lemma: Guidance Consistency}

\begin{lemma}
    \label{lemma:guidance_vanishing}
    Under Assumptions~\ref{ass:exogeneity} and~\ref{ass:alignment}, if $\nabla_{\theta^i} J(\boldsymbol{\theta}^*) = \mathbf{0}$ for all $i \in \mathcal{N}$, then $\nabla_{\theta^i} \mathcal{G}(\boldsymbol{\theta}^*) = \mathbf{0}$ for all $i \in \mathcal{N}$.
\end{lemma}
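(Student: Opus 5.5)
The plan is to argue by contradiction, as the proof sketch of Theorem~\ref{thm:invariance} indicates. We move the parameters along the guidance direction and use Assumption~\ref{ass:alignment} to show that this strictly increases $J$, which is impossible at a stationary point. Fix an agent $i$ and suppose that $\boldsymbol{g} \triangleq \nabla_{\theta^i}\mathcal{G}(\boldsymbol{\theta}^*) \neq \mathbf{0}$. The first step is to write both gradients in a common form through the system state. Assumption~\ref{ass:exogeneity} gives $\nabla_{\boldsymbol{\theta}}\tilde{\boldsymbol{x}}_t=\mathbf{0}$, so differentiating the deviation in Equation~\ref{eq:deviation} only hits $\boldsymbol{x}_t$. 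This yields
\begin{equation*}
\nabla_{\theta^i}\mathcal{G} = \mathbb{E}\big[ (\partial_{\theta^i}\boldsymbol{x}_t)^\top (\boldsymbol{x}_t-\tilde{\boldsymbol{x}}_t)\big].
\end{equation*}
Here $\partial_{\theta^i}\boldsymbol{x}_t$ is understood in the score-function sense, via $\boldsymbol{z}^i_t$ and $\nabla_{\theta^i}\log\pi^i$, exactly as in Equations~\ref{eq:guidance_coefficient}--\ref{eq:gradient_decomposition}. In the same way, viewing $J$ as a function of the induced system state (the notation $J(\boldsymbol{x}_t)$ in Assumption~\ref{ass:alignment}), we get
\begin{equation*}
\nabla_{\theta^i}J = \mathbb{E}\big[(\partial_{\theta^i}\boldsymbol{x}_t)^\top \nabla_{\boldsymbol{x}}J(\boldsymbol{x}_t)\big].
\end{equation*}

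The second step is the perturbation argument. Take the direction $\boldsymbol{v}=-\boldsymbol{g}$ in $\theta^i$, with all other agents frozen. To first order this changes the state by $\delta\boldsymbol{x}_t=(\partial_{\theta^i}\boldsymbol{x}_t)\boldsymbol{v}$. The expected deviation then strictly decreases, because $\mathbb{E}\langle \boldsymbol{x}_t-\tilde{\boldsymbol{x}}_t,\delta\boldsymbol{x}_t\rangle = -\|\boldsymbol{g}\|^2<0$. In words, the perturbation moves the state distribution toward the reference on average. Assumption~\ref{ass:alignment} should then convert this into $\langle \nabla_{\theta^i}J(\boldsymbol{\theta}^*),\boldsymbol{v}\rangle>0$, contradicting $\nabla_{\theta^i}J(\boldsymbol{\theta}^*)=\mathbf{0}$. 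Since the argument applies to every $i$, the lemma follows. Combined with Equation~\ref{eq:augmented_gradient_def}, this is exactly what Theorem~\ref{thm:invariance} needs.

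The main obstacle is this conversion step. Assumption~\ref{ass:alignment} as stated only guarantees $\langle\nabla_{\boldsymbol{x}}J,\tilde{\boldsymbol{x}}_t-\boldsymbol{x}_t\rangle>0$ along the single ray toward the reference. The perturbation $\delta\boldsymbol{x}_t$ is not parallel to that ray; it merely has a negative inner product with $\boldsymbol{x}_t-\tilde{\boldsymbol{x}}_t$ on average. A general half-space condition of this kind does not force $\langle\nabla_{\boldsymbol{x}}J,\delta\boldsymbol{x}_t\rangle>0$.

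My first attempt would use the local structure of $\boldsymbol{z}^i_t$ to reduce to the coordinates that agent $i$ controls, and argue alignment there. If that fails, I would read Assumption~\ref{ass:alignment} in the stronger, gradient-collinear form suggested by the cloud case study: $\nabla_{\boldsymbol{x}}J(\boldsymbol{x}_t)=c_t(\tilde{\boldsymbol{x}}_t-\boldsymbol{x}_t)$ with $c_t>0$, which holds, for instance, whenever $J$ depends on $\boldsymbol{x}$ through $d$ monotonically. Under that form, $\nabla_{\theta^i}J=-\mathbb{E}[c_t(\partial_{\theta^i}\boldsymbol{x}_t)^\top(\boldsymbol{x}_t-\tilde{\boldsymbol{x}}_t)]$. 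When $c_t\equiv c$ is constant, this gives $\nabla_{\theta^i}J=-c\,\nabla_{\theta^i}\mathcal{G}$, and the lemma is immediate. For non-constant $c_t$, I would bound it between positive constants and rerun the contradiction argument with a weighted inner product. I would state this strengthened reading explicitly as the precise hypothesis under which the lemma holds.
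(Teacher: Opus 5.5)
There is a genuine gap in your conversion step, and you diagnose it correctly. Assumption~\ref{ass:alignment} constrains $\nabla_{\boldsymbol{x}}J$ only along the single ray $\tilde{\boldsymbol{x}}_t-\boldsymbol{x}_t$. Your perturbation $\delta\boldsymbol{x}_t$ is only negatively correlated with $\boldsymbol{x}_t-\tilde{\boldsymbol{x}}_t$ on average, so nothing forces $\langle\nabla_{\boldsymbol{x}}J,\delta\boldsymbol{x}_t\rangle>0$. At $\boldsymbol{\theta}^*$ every directional derivative of $J$ is already zero, so a contradiction can only come from a structural identity tying $\nabla_{\theta^i}J$ to $\nabla_{\theta^i}\mathcal{G}$, and the stated assumptions do not supply one.

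Your fallback does not close this in the generality you claim. With constant $c$, collinearity gives $\nabla_{\theta^i}J=-c\,\nabla_{\theta^i}\mathcal{G}$. The lemma is then immediate, but only because you have assumed the two parameter gradients are proportional. With state-dependent $c_t\in[c_{\min},c_{\max}]$, the weighted rerun fails. Write $\boldsymbol{u}_t=(\partial_{\theta^i}\boldsymbol{x}_t)^\top(\boldsymbol{x}_t-\tilde{\boldsymbol{x}}_t)$. Stationarity gives $\mathbb{E}[c_t\boldsymbol{u}_t]=\mathbf{0}$, but you need $\mathbb{E}[\boldsymbol{u}_t]=\mathbf{0}$. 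Two equally likely states with scalar $u_1=1$, $u_2=-2$, $c_1=2$, $c_2=1$ satisfy the first and violate the second. Bounding $c_t$ cannot help, because the obstruction is cancellation across states, not the size of the weights.

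The paper's proof never compares the directions of $\nabla J$ and $\nabla\mathcal{G}$. It uses Assumption~\ref{ass:alignment} only as a nondegeneracy statement: $\nabla_{\boldsymbol{x}}J(\boldsymbol{x}^*)\neq\mathbf{0}$ at every visited state with $\boldsymbol{x}^*\neq\tilde{\boldsymbol{x}}$. It then splits into two cases.
\begin{itemize}
\item If $\boldsymbol{x}^*=\tilde{\boldsymbol{x}}$ on all states visited under $\nu^{\boldsymbol{\pi}^*}$, then $\mathcal{G}(\boldsymbol{\theta}^*)=0$ is the global minimum of $\mathcal{G}\ge 0$. The first-order condition gives $\nabla_{\theta^i}\mathcal{G}(\boldsymbol{\theta}^*)=\mathbf{0}$.
\item Otherwise some visited state has $\nabla_{\boldsymbol{x}}J\neq\mathbf{0}$. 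The sufficient-expressiveness condition of Remark~\ref{remark:sufficient_expressiveness} (full row rank of $\nabla_{\theta^i}\boldsymbol{x}$ for some agent) lifts this to $\nabla_{\theta^i}J(\boldsymbol{\theta}^*)\neq\mathbf{0}$, contradicting stationarity.
\end{itemize}
In effect the paper shows that, under expressiveness, a stationary point of $J$ must attain the reference on-policy, and the lemma follows from $\mathcal{G}\ge 0$ vanishing there.

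The ideas you are missing are this global-minimum dichotomy and the lifting hypothesis. Like you, the paper needs something beyond Assumptions~\ref{ass:exogeneity}--\ref{ass:alignment}, but it adds expressiveness rather than gradient collinearity. Its lifting step is itself sound only if the full-row-rank condition is imposed jointly over all visited states. A per-state rank condition would leave room for the same cross-state cancellation that breaks your weighted argument.
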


\begin{proof}
    We proceed by contradiction. Suppose $\nabla_{\theta^i} J(\boldsymbol{\theta}^*) = \mathbf{0}$ for all $i \in \mathcal{N}$ (the Nash condition), but $\nabla_{\theta^i} \mathcal{G}(\boldsymbol{\theta}^*) \neq \mathbf{0}$ for some agent $i$. Recall that $\mathcal{G}(\boldsymbol{\theta}) = \mathbb{E}_{s \sim \nu^{\boldsymbol{\pi}}}[d(\boldsymbol{x}^{\boldsymbol{\pi}}|_s, \tilde{\boldsymbol{x}}|_s)]$ is the expected deviation under the state visitation distribution $\nu^{\boldsymbol{\pi}}$, where $\boldsymbol{x}^{\boldsymbol{\pi}}|_s$ is the system state jointly determined by agents' actions under policy $\boldsymbol{\pi}$ at MDP state $s$, and $\tilde{\boldsymbol{x}}|_s$ is the corresponding reference. We write $\boldsymbol{x}^* \triangleq \boldsymbol{x}^{\boldsymbol{\pi}^*}|_s$ and $\tilde{\boldsymbol{x}} \triangleq \tilde{\boldsymbol{x}}|_s$ for a given MDP state $s$, where $\boldsymbol{\pi}^* \triangleq \boldsymbol{\pi}_{\boldsymbol{\theta}^*}$.

    \emph{Step 1.} If $\boldsymbol{x}^* = \tilde{\boldsymbol{x}}$ for all states $s$ visited under $\nu^{\boldsymbol{\pi}^*}$, then $d(\boldsymbol{x}^*, \tilde{\boldsymbol{x}}) = 0$ everywhere and $\mathcal{G}(\boldsymbol{\theta}^*) = 0$. This is the global minimum of $\mathcal{G} \geq 0$, so the first-order condition gives $\nabla_{\theta^i} \mathcal{G}(\boldsymbol{\theta}^*) = \mathbf{0}$ for all $i$, contradicting our assumption. Therefore there exists a visited state $s$ at which $\boldsymbol{x}^* \neq \tilde{\boldsymbol{x}}$.

    \emph{Step 2.} By Assumption~\ref{ass:alignment}, at any state $s$ where $\boldsymbol{x}^* \neq \tilde{\boldsymbol{x}}$:
    \begin{equation*}
        \langle \nabla_{\boldsymbol{x}} J(\boldsymbol{x}^*),\, \tilde{\boldsymbol{x}} - \boldsymbol{x}^* \rangle > 0.
    \end{equation*}
    Since $\tilde{\boldsymbol{x}} - \boldsymbol{x}^* \neq \mathbf{0}$, the state-space gradient $\nabla_{\boldsymbol{x}} J(\boldsymbol{x}^*)$ must be non-zero.

    \emph{Step 3.} Under sufficient policy expressiveness (see Remark~\ref{remark:sufficient_expressiveness}), a non-zero state-space gradient $\nabla_{\boldsymbol{x}} J(\boldsymbol{x}^*) \neq \mathbf{0}$ implies $\nabla_{\theta^i} J(\boldsymbol{\theta}^*) \neq \mathbf{0}$ for some agent $i$. This contradicts the Nash condition.

    Therefore $\nabla_{\theta^i} \mathcal{G}(\boldsymbol{\theta}^*) = \mathbf{0}$ for all $i \in \mathcal{N}$.
\end{proof}

\begin{remark}[Sufficient Expressiveness]
    The proof requires the Jacobian $\nabla_{\theta^i} \boldsymbol{x}$ to have full row rank for at least one agent $i$, ensuring that a non-zero state-space gradient $\nabla_{\boldsymbol{x}} J$ lifts to a non-zero parameter-space gradient $\nabla_{\theta^i} J$. This is a standard condition for overparameterized neural network policies ($p \gg m$) and holds generically in the absence of degenerate symmetries.
    \label{remark:sufficient_expressiveness}
\end{remark}

\subsection{Proof of Theorem~\ref{thm:invariance}}

\begin{proof}
    At a Nash equilibrium $\boldsymbol{\theta}^*$, we have $\nabla_{\theta^i} J(\boldsymbol{\theta}^*) = \mathbf{0}$ for all $i \in \mathcal{N}$. By Lemma~\ref{lemma:guidance_vanishing}, $\nabla_{\theta^i} \mathcal{G}(\boldsymbol{\theta}^*) = \mathbf{0}$ for all $i \in \mathcal{N}$. Therefore:
    \begin{equation*}
        \nabla_{\theta^i} J_{\alpha}(\boldsymbol{\theta}^*) = (1-\alpha) \nabla_{\theta^i} J(\boldsymbol{\theta}^*) - \alpha \nabla_{\theta^i} \mathcal{G}(\boldsymbol{\theta}^*) = \mathbf{0},
    \end{equation*}
    confirming that the augmentation does not alter the Nash equilibria of the original game.
\end{proof}

\section{Variance Decomposition and Reduction Analysis (Theorem~\ref{thm:variance})}
\label{app:proof_variance_complexity}

In this appendix, we prove that the DG-PG estimator achieves agent-independent variance scaling. As established in Section~\ref{sec:variance_analysis}, the standard policy gradient estimator has variance $\sigma_J^2 = \mathcal{O}(N)$ due to cross-agent exploration noise \citep{kuba2021settling}, while the gradient estimator for the guidance term has variance $\sigma_{\mathcal{G}}^2 = \mathcal{O}(1)$ since its coefficient depends only on deterministic quantities given a realized trajectory.

\textbf{Theorem~\ref{thm:variance} (Agent-Independent Variance).} \textit{Under Assumptions~\ref{ass:exogeneity}--\ref{ass:alignment}, let $\sigma_J^2 \triangleq \mathrm{Var}(\hat{g}_J^i)$, $\sigma_{\mathcal{G}}^2 \triangleq \mathrm{Var}(\hat{g}_{\mathcal{G}}^i)$, and $\rho \triangleq \mathrm{Corr}(\hat{g}_J^i, \hat{g}_{\mathcal{G}}^i) \in (-1, 0)$. The minimum variance over the guidance weight $\alpha$ is}
\begin{equation*}
    \min_{\alpha \in (0,1)} \mathrm{Var}(\hat{g}_{\mathrm{DG}}^i) = \frac{\sigma_J^2 \, \sigma_{\mathcal{G}}^2 \,(1 - \rho^2)}{\sigma_J^2 + \sigma_{\mathcal{G}}^2 + 2\rho\,\sigma_J \sigma_{\mathcal{G}}}.
\end{equation*}
\textit{Since $\sigma_J^2 = \mathcal{O}(N)$ and $\sigma_{\mathcal{G}}^2 = \mathcal{O}(1)$, this minimum variance is $\mathcal{O}(1)$, independent of $N$.}

\subsection{Proof of Theorem~\ref{thm:variance}}

\begin{proof}
    For the DG-PG estimator $\hat{g}_{\text{DG}}^i = (1-\alpha) \hat{g}_{J}^i - \alpha \hat{g}_{\mathcal{G}}^i$, the variance is:
    \begin{align}
        \sigma_{\text{DG}}^2(\alpha) = (1-\alpha)^2 \sigma_J^2 + \alpha^2 \sigma_{\mathcal{G}}^2 - 2\alpha(1-\alpha) \rho \sigma_J \sigma_{\mathcal{G}},
        \label{eq:variance_combined}
    \end{align}
    where $\rho = \text{Corr}(\hat{g}_J^i, \hat{g}_{\mathcal{G}}^i) \in (-1, 0)$ by Assumption~\ref{ass:alignment}.

    The estimator variance $\sigma_{\text{DG}}^2(\alpha)$ is a convex quadratic in $\alpha$. To minimize it, we set $\frac{d}{d\alpha}\sigma_{\text{DG}}^2 = 0$:
    \begin{align}
        -2(1-\alpha) \sigma_J^2 + 2\alpha \sigma_{\mathcal{G}}^2 - 2(1-2\alpha) \rho \sigma_J \sigma_{\mathcal{G}} = 0
        \quad \Longrightarrow \quad
        \alpha^* = \frac{\sigma_J^2 + \rho \, \sigma_J \sigma_{\mathcal{G}}}{\sigma_J^2 + \sigma_{\mathcal{G}}^2 + 2\rho \, \sigma_J \sigma_{\mathcal{G}}}.
        \label{eq:optimal_alpha}
    \end{align}

    Substituting $\alpha^*$ back into $\sigma_{\text{DG}}^2(\alpha)$, the minimum variance is:
    \begin{align}
        \sigma_{\text{DG}, \min}^2 = \frac{\sigma_J^2 \sigma_{\mathcal{G}}^2 (1-\rho^2)}{\sigma_J^2 + \sigma_{\mathcal{G}}^2 + 2\rho \sigma_J \sigma_{\mathcal{G}}}.
        \label{eq:min_variance}
    \end{align}

    As established in Section~\ref{sec:variance_analysis}, $\sigma_J^2 = \mathcal{O}(N)$ because the return aggregates all $N$ agents' stochastic actions, while $\sigma_{\mathcal{G}}^2 = \mathcal{O}(1)$ because the guidance coefficient is computed from deterministic quantities. Dividing numerator and denominator by $\sigma_J^2$:
    \begin{align*}
        \sigma_{\text{DG}, \min}^2
        = \frac{\sigma_{\mathcal{G}}^2 (1-\rho^2)}{1 + \frac{\sigma_{\mathcal{G}}^2}{\sigma_J^2} + 2\rho \frac{\sigma_{\mathcal{G}}}{\sigma_J}}
        = \frac{\sigma_{\mathcal{G}}^2 (1-\rho^2)}{1 + \mathcal{O}(N^{-1})}.
    \end{align*}
    The numerator $\sigma_{\mathcal{G}}^2 (1-\rho^2) = \mathcal{O}(1)$ since $\sigma_{\mathcal{G}}^2 = \mathcal{O}(1)$ and $\rho \in (-1, 0)$ is a constant. The denominator $1 + \mathcal{O}(N^{-1})$ is bounded away from zero for all $N \geq 1$. Therefore:
    \begin{align}
        \sigma_{\text{DG}, \min}^2 = \mathcal{O}(1), \quad \text{for all } N \geq 1.
        \label{eq:variance_limit}
    \end{align}
\end{proof}

\begin{remark}
    The optimal mixing coefficient $\alpha^* \to 1$ as $N \to \infty$, indicating that large-scale systems should rely primarily on the guidance signal to suppress cross-agent variance. In practice, we adopt a dynamic $\alpha$ schedule, starting with a large $\alpha$ for fast early convergence and decaying to a smaller value to allow the RL signal to fine-tune beyond the reference (see Section~\ref{sec:experiments}). 
\end{remark}

\section{Convergence Rate and Sample Complexity (Theorem~\ref{thm:complexity})}
\label{app:proof_complexity}

We establish the sample complexity of DG-PG under standard regularity conditions.

\textbf{Theorem~\ref{thm:complexity} (Sample Complexity).} \textit{Under the smoothness and PL conditions (Assumption~\ref{ass:smoothness_pl} below) and with $\sigma_{\text{DG}}^2 = \mathcal{O}(1)$ from Theorem~\ref{thm:variance}, DG-PG achieves an $\epsilon$-optimal policy in $T = \mathcal{O}\bigl(\frac{L}{\mu^2 \epsilon} \log \frac{1}{\epsilon}\bigr)$ iterations, independent of $N$.}

\begin{assumption}[Smoothness and PL Condition]
    \label{ass:smoothness_pl}
    The cooperative objective $J(\boldsymbol{\theta})$ is $L$-smooth and satisfies the Polyak-\L{}ojasiewicz (PL) condition with PL constant $\mu \in (0, L]$:
    \begin{equation*}
        \frac{1}{2} \|\nabla J(\boldsymbol{\theta})\|^2 \ge \mu (J^* - J(\boldsymbol{\theta})),
    \end{equation*}
    where $J^* \triangleq \sup_{\boldsymbol{\theta}} J(\boldsymbol{\theta})$ denotes the optimal value of the cooperative objective.
\end{assumption}
Together, these conditions guarantee that stochastic gradient ascent achieves linear convergence of the optimality gap. These are standard conditions in stochastic optimization \citep{karimi2016linear}, needed only to convert the $\mathcal{O}(1)$ variance bound (Theorem~\ref{thm:variance}) into an explicit iteration complexity. They are commonly satisfied for smooth policy parameterizations such as softmax policies \citep{mei2020global, agarwal2021theory}.

\subsection{Proof of Theorem~\ref{thm:complexity}}

\begin{proof}
    Our proof first characterizes, through $L$-smoothness and the update rule, how the variance and bias of $\hat{g}_{\text{DG}}$ affect the per-iteration change in $J$. We then apply the PL condition to connect the gradient norm $\|\nabla J\|^2$ to the optimality gap $J^* - J(\boldsymbol{\theta}_k)$, thereby obtaining the $N$-independent iteration complexity. The following analysis holds for a \textit{fixed} mixing parameter $\alpha \in (0,1)$. When $\alpha$ is varied dynamically in practice, the convergence guarantee applies to each phase as a snapshot analysis. 

    \textit{Step 1: Smoothness bound.} Let $\bar{g}_{\text{DG}}(\boldsymbol{\theta}_k) \triangleq \mathbb{E}_k[\hat{g}_{\text{DG}}(\boldsymbol{\theta}_k)]$ denote the expected augmented gradient. Consider the stochastic gradient ascent update
    \begin{align}
        \boldsymbol{\theta}_{k+1} = \boldsymbol{\theta}_k + \eta \hat{g}_{\text{DG}}(\boldsymbol{\theta}_k)
        \label{eq:update_rule}
    \end{align}
    By $L$-smoothness of $J$:
    \begin{align*}
        J(\boldsymbol{\theta}_{k+1})
        \ge J(\boldsymbol{\theta}_k)
        + \langle \nabla J(\boldsymbol{\theta}_k),\, \boldsymbol{\theta}_{k+1} - \boldsymbol{\theta}_k \rangle
        - \frac{L}{2}\|\boldsymbol{\theta}_{k+1} - \boldsymbol{\theta}_k\|^2.
    \end{align*}
    Substituting $\boldsymbol{\theta}_{k+1} - \boldsymbol{\theta}_k = \eta \hat{g}_{\text{DG}}(\boldsymbol{\theta}_k)$ and taking the conditional expectation $\mathbb{E}_k[\cdot] \triangleq \mathbb{E}[\cdot \,|\, \boldsymbol{\theta}_k]$:
    \begin{align}
        \mathbb{E}_k[J(\boldsymbol{\theta}_{k+1})]
        \ge J(\boldsymbol{\theta}_k)
        + \eta \langle \nabla J(\boldsymbol{\theta}_k),\, \bar{g}_{\text{DG}}(\boldsymbol{\theta}_k) \rangle
        - \frac{L\eta^2}{2} \mathbb{E}_k[\|\hat{g}_{\text{DG}}(\boldsymbol{\theta}_k)\|^2].
        \label{eq:descent_lemma}
    \end{align}

    \textit{Step 2: Per-iteration descent bound.} The descent inequality Equation~\ref{eq:descent_lemma} shows that the per-iteration ascent on $J$ is governed by two quantities: the inner product term $\langle \nabla J, \bar{g}_{\text{DG}} \rangle$, which captures the alignment between the true gradient and the augmented gradient, and the second moment term $\mathbb{E}_k[\|\hat{g}_{\text{DG}}\|^2]$, which captures the estimator's variance and bias. We bound both independently of $N$.

    For the inner product term $\langle \nabla J(\boldsymbol{\theta}_k),\, \bar{g}_{\text{DG}}(\boldsymbol{\theta}_k) \rangle$, by definition of the expected augmented gradient:
    \begin{align*}
        \bar{g}_{\text{DG}}(\boldsymbol{\theta}_k) = (1-\alpha)\nabla J(\boldsymbol{\theta}_k) - \alpha \nabla \mathcal{G}(\boldsymbol{\theta}_k).
    \end{align*}
    Therefore:
    \begin{align*}
        \langle \nabla J(\boldsymbol{\theta}_k),\, \bar{g}_{\text{DG}}(\boldsymbol{\theta}_k) \rangle
        = (1-\alpha)\|\nabla J(\boldsymbol{\theta}_k)\|^2 - \alpha \langle \nabla J(\boldsymbol{\theta}_k),\, \nabla \mathcal{G}(\boldsymbol{\theta}_k) \rangle.
    \end{align*}
    Since improving $J$ reduces the deviation $\mathcal{G}$ (Assumption~\ref{ass:alignment}), the true gradients satisfy $\langle \nabla J(\boldsymbol{\theta}_k), \nabla \mathcal{G}(\boldsymbol{\theta}_k) \rangle \le 0$, so $-\alpha \langle \nabla J, \nabla \mathcal{G} \rangle \ge 0$. Dropping this non-negative term yields the lower bound for the inner product term in Equation \ref{eq:descent_lemma}:
    \begin{align}
        \langle \nabla J(\boldsymbol{\theta}_k),\, \bar{g}_{\text{DG}}(\boldsymbol{\theta}_k) \rangle \ge (1-\alpha) \|\nabla J(\boldsymbol{\theta}_k)\|^2.
        \label{eq:inner_product_bound}
    \end{align}

    For the second moment $\mathbb{E}_k[\|\hat{g}_{\text{DG}}(\boldsymbol{\theta}_k)\|^2]$, by the bias-variance decomposition:
    \begin{align}
        \mathbb{E}_k[\|\hat{g}_{\text{DG}}(\boldsymbol{\theta}_k)\|^2] = \|\bar{g}_{\text{DG}}(\boldsymbol{\theta}_k)\|^2 + \sigma_{\text{DG}}^2.
        \label{eq:bias_variance}
    \end{align}
    The variance $\sigma_{\text{DG}}^2 = \mathcal{O}(1)$ by Theorem~\ref{thm:variance}. Since $\bar{g}_{\text{DG}} = (1-\alpha)\nabla J - \alpha \nabla \mathcal{G}$, applying the triangle inequality:
    \begin{align}
        \|\bar{g}_{\text{DG}}(\boldsymbol{\theta}_k)\|^2
        \le 2(1-\alpha)^2 \|\nabla J(\boldsymbol{\theta}_k)\|^2 + 2\alpha^2 \|\nabla \mathcal{G}(\boldsymbol{\theta}_k)\|^2.
        \label{eq:triangle_bound}
    \end{align}
    For the $\|\nabla \mathcal{G}(\boldsymbol{\theta}_k)\|$ term, by Equation~\ref{eq:gradient_decomposition} the gradient of the guidance term has the form $\nabla_{\theta^i} \mathcal{G} = \mathbb{E}[\langle \boldsymbol{x}_t - \tilde{\boldsymbol{x}}_t, \boldsymbol{z}_t^i \rangle \nabla_{\theta^i} \log \pi^i]$. Each factor is bounded:
    \begin{itemize}
        \item $\|\boldsymbol{x}_t - \tilde{\boldsymbol{x}}_t\|$ is bounded by compactness of the system state space.
        \item $\|\boldsymbol{z}_t^i\|$ is bounded because each agent has bounded marginal effect on the system state.
        \item $\|\nabla_{\theta^i} \log \pi^i\|$ is bounded by standard score function regularity.
    \end{itemize}
    Applying Cauchy--Schwarz to the inner product $\langle \boldsymbol{x}_t - \tilde{\boldsymbol{x}}_t, \boldsymbol{z}_t^i \rangle$ and combining with the score function bound, there exists a constant $B > 0$, independent of $N$, such that:
    \begin{align}
        \|\nabla \mathcal{G}(\boldsymbol{\theta}_k)\| \le B.
        \label{eq:gradient_bound}
    \end{align}
    Combining Equations~\ref{eq:gradient_bound} and~\ref{eq:triangle_bound} with Equation~\ref{eq:bias_variance}, we obtain:
    \begin{align}
        \mathbb{E}_k[\|\hat{g}_{\text{DG}}(\boldsymbol{\theta}_k)\|^2]
        \le 2(1-\alpha)^2 \|\nabla J(\boldsymbol{\theta}_k)\|^2 + 2\alpha^2 B^2 + \sigma_{\text{DG}}^2.
        \label{eq:second_moment_bound}
    \end{align}
    which is the upper bound for the second moment term in Equation \ref{eq:descent_lemma}.

    Substituting the inner product bound (Equation~\ref{eq:inner_product_bound}) and the second moment bound (Equation~\ref{eq:second_moment_bound}) into the descent inequality (Equation~\ref{eq:descent_lemma}), we obtain the per-iteration descent bound:
    \begin{align*}
        \mathbb{E}_k[J(\boldsymbol{\theta}_{k+1})]
        &\ge J(\boldsymbol{\theta}_k)
        + \eta(1-\alpha) \|\nabla J(\boldsymbol{\theta}_k)\|^2
        - \frac{L\eta^2}{2} \left( 2(1-\alpha)^2 \|\nabla J(\boldsymbol{\theta}_k)\|^2 + 2\alpha^2 B^2 + \sigma_{\text{DG}}^2 \right) \\
        &= J(\boldsymbol{\theta}_k)
        + \left[\eta(1-\alpha) - L\eta^2(1-\alpha)^2\right] \|\nabla J(\boldsymbol{\theta}_k)\|^2
        - \frac{L\eta^2}{2}(2\alpha^2 B^2 + \sigma_{\text{DG}}^2).
    \end{align*}
    Under the update rule (Equation~\ref{eq:update_rule}), for the gradient signal to contribute positively to the expected change in $J$, the coefficient of $\|\nabla J(\boldsymbol{\theta}_k)\|^2$ must be positive. Choosing $\eta \le \frac{1}{2L(1-\alpha)}$ ensures $\eta(1-\alpha) - L\eta^2(1-\alpha)^2 \ge \frac{\eta(1-\alpha)}{2}$:
    \begin{align}
        \mathbb{E}_k[J(\boldsymbol{\theta}_{k+1})]
        \ge J(\boldsymbol{\theta}_k)
        + \frac{\eta(1-\alpha)}{2} \|\nabla J(\boldsymbol{\theta}_k)\|^2
        - \frac{L\eta^2}{2}(2\alpha^2 B^2 + \sigma_{\text{DG}}^2).
        \label{eq:per_iteration_descent}
    \end{align}

    \textbf{Step 3: PL condition and recursion.} We now use the PL condition to convert the gradient norm bound into optimality gap contraction.

    Subtracting both sides of Equation~\ref{eq:per_iteration_descent} from $J^*$ (Assumption~\ref{ass:smoothness_pl}):
    \begin{align*}
        J^* - \mathbb{E}_k[J(\boldsymbol{\theta}_{k+1})]
        \le J^* - J(\boldsymbol{\theta}_k)
        - \frac{\eta(1-\alpha)}{2} \|\nabla J(\boldsymbol{\theta}_k)\|^2
        + \frac{L\eta^2}{2}(2\alpha^2 B^2 + \sigma_{\text{DG}}^2).
    \end{align*}
    Taking full expectation and defining $\Delta_k \triangleq \mathbb{E}[J^* - J(\boldsymbol{\theta}_k)]$:
    \begin{align*}
        \Delta_{k+1} \le \Delta_k - \frac{\eta(1-\alpha)}{2} \mathbb{E}[\|\nabla J(\boldsymbol{\theta}_k)\|^2] + \frac{L\eta^2}{2}(2\alpha^2 B^2 + \sigma_{\text{DG}}^2).
    \end{align*}
    By the PL condition (Assumption~\ref{ass:smoothness_pl}), $\mathbb{E}[\|\nabla J(\boldsymbol{\theta}_k)\|^2] \ge 2\mu \Delta_k$. Substituting:
    \begin{align*}
        \Delta_{k+1} &\le (1 - (1-\alpha)\mu\eta) \Delta_k + \frac{L\eta^2}{2}(2\alpha^2 B^2 + \sigma_{\text{DG}}^2).
    \end{align*}

    Unrolling this recurrence for $T$ iterations by the geometric series formula:
    \begin{align*}
        \Delta_T
        &\le (1-(1-\alpha)\mu\eta)^T \Delta_0 + \frac{L\eta^2}{2}(2\alpha^2 B^2 + \sigma_{\text{DG}}^2) \cdot \frac{1 - (1-(1-\alpha)\mu\eta)^T}{(1-\alpha)\mu\eta}.
    \end{align*}
    Since $\mu, \eta > 0$ and $\alpha \in (0,1)$, we have $(1-\alpha)\mu\eta > 0$. Combined with $\eta \le \frac{1}{2L(1-\alpha)}$ and $\mu \le L$ (Assumption~\ref{ass:smoothness_pl}), this gives $(1-\alpha)\mu\eta \in (0, \frac{1}{2}]$. Therefore $1 - (1-(1-\alpha)\mu\eta)^T \in (0, 1]$, which gives:
    \begin{align}
        \Delta_T \le (1-(1-\alpha)\mu\eta)^T \Delta_0 + \frac{L\eta}{2(1-\alpha)\mu} (2\alpha^2 B^2 + \sigma_{\text{DG}}^2).
        \label{eq:delta_unrolled}
    \end{align}

    To achieve $\Delta_T \le \epsilon$, it suffices to bound both terms on the right-hand side of Equation~\ref{eq:delta_unrolled} by $\epsilon/2$. For the second term, setting it equal to $\epsilon/2$:
    \begin{align*}
        \frac{L\eta}{2(1-\alpha)\mu}(2\alpha^2 B^2 + \sigma_{\text{DG}}^2) = \frac{\epsilon}{2}
        \quad \Longrightarrow \quad
        \eta = \frac{(1-\alpha)\mu\epsilon}{L(2\alpha^2 B^2 + \sigma_{\text{DG}}^2)}.
    \end{align*}
    For the first term, requiring $(1-(1-\alpha)\mu\eta)^T \Delta_0 \le \frac{\epsilon}{2}$ gives:
    \begin{align*}
        T \ge \frac{1}{(1-\alpha)\mu\eta} \log \frac{2\Delta_0}{\epsilon}.
    \end{align*}
    Substituting the chosen $\eta$:
    \begin{align*}
        T \ge \frac{L(2\alpha^2 B^2 + \sigma_{\text{DG}}^2)}{(1-\alpha)^2 \mu^2 \epsilon} \log \frac{2\Delta_0}{\epsilon}.
    \end{align*}
    The quantities $L$, $\mu$, $B$, $\alpha$, and $\sigma_{\text{DG}}^2 = \mathcal{O}(1)$ (by Theorem~\ref{thm:variance}) are all independent of $N$. This simplifies to:
    \begin{align*}
        T = \mathcal{O}\left(\frac{L}{\mu^2 \epsilon} \log \frac{1}{\epsilon} \right),
    \end{align*}
    completing the proof.
\end{proof}

\section{Cloud Scheduling Environment}
\label{app:env_spec}

We provide complete specifications of the cloud scheduling environment used in all experiments.

\subsection{Physical Environment}

\paragraph{Cluster Configuration.}
The cluster consists of $N$ servers drawn from 12 heterogeneous instance types spanning three hardware generations, with configurations based on AWS instance types. Table~\ref{tab:server_types} details the full configuration.

\begin{table}[h]
    \centering
    \caption{Server instance types used in the simulation, based on AWS instance configurations.}
    \label{tab:server_types}
    \begin{tabular}{llccccc}
        \toprule
        \textbf{Gen.} & \textbf{Instance Type} & \textbf{vCPUs} & \textbf{Memory (GB)} & \textbf{$\eta_{\text{CPU}}$} & \textbf{$\eta_{\text{Mem}}$} & \textbf{Weight} \\
        \midrule
        \multirow{3}{*}{Gen-1} & m4.8xlarge & 32 & 128 & 0.70 & 0.68 & 10\% \\
        & t3.2xlarge & 16 & 64 & 0.72 & 0.70 & 3\% \\
        & t3a.2xlarge & 16 & 64 & 0.75 & 0.72 & 2\% \\
        \midrule
        \multirow{6}{*}{Gen-2} & m5.8xlarge & 32 & 128 & 0.95 & 0.93 & 20\% \\
        & c5.18xlarge & 64 & 128 & 0.98 & 0.93 & 10\% \\
        & c5.12xlarge & 48 & 96 & 0.97 & 0.93 & 8\% \\
        & r5.8xlarge & 32 & 256 & 0.87 & 0.92 & 9\% \\
        & r5.6xlarge & 24 & 192 & 0.84 & 0.93 & 7\% \\
        & m5n.12xlarge & 48 & 192 & 0.96 & 0.92 & 6\% \\
        \midrule
        \multirow{3}{*}{Gen-3} & m6i.8xlarge & 32 & 128 & 1.06 & 0.95 & 15\% \\
        & c5n.18xlarge & 64 & 256 & 1.08 & 0.95 & 8\% \\
        & c6i.32xlarge & 96 & 384 & 1.08 & 0.96 & 2\% \\
        \bottomrule
    \end{tabular}
\end{table}

The efficiency parameters $\eta_{\text{CPU}}$ and $\eta_{\text{Mem}}$ model performance-per-watt differences across hardware generations. Gen-1 instances (circa 2015) have $\eta_{\text{CPU}} \in [0.70, 0.75]$, Gen-2 (circa 2017) have $\eta_{\text{CPU}} \in [0.84, 0.98]$, and Gen-3 (circa 2020+) have $\eta_{\text{CPU}} \in [1.06, 1.08]$. Job completion time scales as $\text{duration} / \eta_{\text{CPU}}$, making high-efficiency servers more productive. Server types are sampled according to the specified weights and randomly shuffled across indices at scenario generation to prevent agents from exploiting positional patterns.

\paragraph{Workload Generation.}
We employ a \textit{bimodal task distribution} reflecting real-world workload diversity:
\begin{itemize}
    \item \textbf{CPU-intensive tasks} ($\sim$60\% of arrivals):
    CPU requirement drawn from Pareto($\alpha = 1.7$, $x_m = 0.6$), clipped to $[0.5, 20]$ cores;
    memory requirement $m = c \cdot U(1.5, 3.0)$, clipped to $[0.5, 64]$ GB (low memory-to-CPU ratio).

    \item \textbf{Memory-intensive tasks} ($\sim$40\% of arrivals):
    CPU requirement drawn from Pareto($\alpha = 2.2$, $x_m = 0.4$), clipped to $[0.2, 8]$ cores;
    memory requirement $m = c \cdot U(6, 12)$, clipped to $[1, 128]$ GB (high memory-to-CPU ratio).
\end{itemize}
All tasks share the same duration distribution: Log-normal($\mu = 3.0$, $\sigma = 0.5$), clipped to $[5, 150]$ timesteps, yielding mean duration $\sim$20 timesteps with high variance.

The Pareto shape parameter $\alpha \in [1.7, 2.2]$ is calibrated to match the heavy-tailed resource demand distributions observed in the Google cluster trace~\citep{reiss2011google}, where a small fraction of jobs consume disproportionate resources. The log-normal duration distribution is consistent with empirical observations from production clusters~\citep{reiss2012heterogeneity}. The bimodal CPU-to-memory ratio reflects the coexistence of compute-intensive (e.g., batch analytics) and memory-intensive (e.g., in-memory databases) workloads in modern data centers~\citep{delimitrou2014quasar}.

\paragraph{Arrival Process.}
Jobs arrive according to a non-homogeneous Poisson process with time-varying intensity:
\begin{equation*}
    \lambda(t) = \bar{\lambda} \cdot \max\left(0.1,\; 1 + 0.3\sin\left(\frac{2\pi t}{1000}\right) + \epsilon_t\right), \quad \epsilon_t \sim \mathcal{N}(0, 0.1)
\end{equation*}
where the base rate $\bar{\lambda}$ is calibrated via a capacity-aware formula to achieve target utilization $\rho \in [0.80, 0.85]$ across all scales. The sinusoidal term introduces a 1000-step tidal period, capturing the diurnal traffic patterns observed in production systems~\citep{cortez2017resource}, while $\epsilon_t$ adds stochastic perturbation.

\subsection{Agent Interface}

At each timestep, newly arrived jobs are assigned to dedicated agents that select placement actions concurrently. We set the number of job-dispatch agents equal to the number of servers so that newly arrived jobs can be dispatched without accumulating at the assignment stage. Under this convention, $N$ denotes both the server scale and number of job-dispatch agents. We evaluate configurations from $N{=}2$ to $N{=}1500$.

\paragraph{Scale Configurations.}
Table~\ref{tab:app_scale} summarizes the experimental configurations. Each scale is defined by the number of servers $N$, the number of action-space clusters $K$, the number of servers per cluster $S = N/K$, and the target cluster utilization $\rho$ used to calibrate the arrival rate $\bar{\lambda}$ (see Arrival Process above). Each episode consists of 3000 timesteps, with job arrivals during the first 2000 and the remaining 1000 reserved as a drain period.

\begin{table}[h]
    \centering
    \caption{Scale configurations used in the experimental evaluation.}
    \label{tab:app_scale}
    \resizebox{\textwidth}{!}{%
    \begin{tabular}{ccccc}
        \toprule
        \textbf{Servers} ($N$) & \textbf{Clusters} ($K$) & \textbf{Servers/Cluster} ($S$) & \textbf{Episode Length} & \textbf{Target Util.} ($\rho$) \\
        \midrule
        2   & 2   & 1 & 3000 & $[0.80, 0.85]$ \\
        5   & 5   & 1 & 3000 & $[0.80, 0.85]$ \\
        10  & 10  & 1 & 3000 & $[0.80, 0.85]$ \\
        20  & 20  & 1 & 3000 & $[0.80, 0.85]$ \\
        50  & 25  & 2 & 3000 & $[0.80, 0.85]$ \\
        100 & 25  & 4 & 3000 & $[0.80, 0.85]$ \\
        200 & 40  & 5 & 3000 & $[0.80, 0.85]$ \\
        500 & 50  & 10 & 3000 & $[0.80, 0.85]$ \\
        1000 & 50 & 20 & 3000 & $[0.80, 0.85]$ \\
        1500 & 75 & 20 & 3000 & $[0.80, 0.85]$ \\
        \bottomrule
    \end{tabular}
    }
\end{table}

\paragraph{Action Space.}
For scales with $N \leq 20$, each agent directly selects one of the $N$ servers ($K = N$, $S = 1$). For scales with $N \geq 50$, the action space becomes prohibitively large, so we first sort servers by capacity $(c_k^{\text{CPU}}, c_k^{\text{Mem}})$ and then partition them into $K<N$ clusters of $S=N/K$ servers with similar profiles. Each agent selects a \textit{cluster} (action space of size $K$). Within the selected cluster, the simulator uses a deterministic Best-Fit rule to place the job on the server with the highest current utilization that still has sufficient free resources. If no server in the selected cluster can accommodate the job, it is queued on the least-loaded server in that cluster.

\paragraph{Observation Space.} Each agent receives a observation consisting of:
\begin{itemize}
    \item \textbf{Global server state} ($7N$ dims): For each server $k \in \{1, \ldots, N\}$, we include CPU utilization, memory utilization, local queue length (clipped at 50), CPU efficiency, memory efficiency, CPU capacity, and memory capacity.
    \item \textbf{All job information} ($2N_{\text{agents}}$ dims): CPU and memory requirements for jobs arrived at current timestep.
    \item \textbf{Job features} (2 dims): CPU and memory requirements of this agent's assigned job.
    \item \textbf{Temporal feature} (1 dim): Timestep $t / T_{\max}$.
    \item \textbf{Agent identifier} (1 dim): Agent index, used with a learnable embedding (see below).
\end{itemize}

\paragraph{Observation Compression.}
The raw observation provides the full server and job information needed for placement decisions. However, because it explicitly enumerates all servers and newly arrived jobs, its dimension grows quickly with $N$. We therefore replace the two high-dimensional components separately. Per-server states are represented by cluster summaries, and per-job information is represented by job summaries.
\begin{itemize}
    \item \textbf{Cluster summary.} Each cluster is represented by a 28-dimensional vector consisting of five groups of features.
    \begin{itemize}
        \item \textbf{CPU utilization.} Min, Q25, median, Q75, max, mean, and std.
        \item \textbf{Memory utilization.} Min, Q25, median, Q75, max, mean, and std.
        \item \textbf{CPU-memory joint distribution.} Correlation, fraction CPU-heavy, fraction memory-heavy, and fraction balanced.
        \item \textbf{Queue congestion.} Total, max, mean, std, and fraction non-empty.
        \item \textbf{Capacity and efficiency.} Free CPU fraction, free memory fraction, mean efficiency, cluster size, and overloaded fraction.
    \end{itemize}
    \item \textbf{Job summary.} Jobs arriving at the current timestep are represented by 6 aggregate job-demand features, including the mean, std, and max of CPU demand and the mean, std, and max of memory demand.
    \item \textbf{Retained features.} The agent's own job features, temporal feature, and agent identifier are kept unchanged.
\end{itemize}
The total observation dimension is therefore $28K + 10$, consisting of $28K$ cluster-summary features, 6 job-summary features, and 4 retained agent/time features. Since the compression is most beneficial once each cluster contains multiple servers, we apply it for $N \ge 200$ ($S=N/K \ge 5$), achieving up to $6.4\times$ reduction compared with the raw representation.

\paragraph{Agent ID Embedding.}
To enable parameter sharing across heterogeneous agents, we embed each agent's normalized index via a learnable \texttt{nn.Embedding} layer (dimension 16), which is concatenated with the remaining observation features before the first hidden layer. This allows a single shared network to learn agent-specific behaviors.

\subsection{Reward Function}
The shared reward signal is a weighted combination of two objectives:
\begin{equation*}
    r_t = -\left(w_1 \cdot \underbrace{\frac{Q_{\text{global}}(t) + \sum_{k=1}^{N} Q_k(t)}{N}}_{\text{Queue Penalty (Latency)}} + w_2 \cdot \underbrace{\frac{\sum_{k=1}^{N} \frac{L_k(t)}{\eta_k}}{C_{\text{total}}}}_{\text{Energy Penalty}}\right)
\end{equation*}
where $Q_{\text{global}}(t)$ is the global buffer size, $Q_k(t)$ is the local queue at server $k$, $L_k(t)$ is the workload (CPU + memory) at server $k$, $\eta_k$ is the server's efficiency, and $C_{\text{total}}$ is the total cluster capacity. We set $w_1 = 1$ and $w_2 = 20$ so that both penalties contribute comparably to the reward. Without the multiplier, the queue penalty typically ranges in $[0, 20]$, whereas the normalized energy ratio remains below $1$. Load-balancing (utilization variance) is handled entirely by the DG-PG guidance signal and is therefore excluded from the reward.

\section{Training Configuration}
\label{app:hyperparams}

\subsection{Training and Evaluation Protocol}

For training, each run uses an independent random seed for model initialization and stochastic simulation. We run 3 seeds for each $\alpha$ and scale in the $\alpha$-selection experiment, 5 seeds for each method and scale in the controlled comparison, and 5 seeds for $N \leq 100$ and 3 seeds for $N \geq 200$ in the scalability experiment. For evaluation, every saved checkpoint is tested on the same 10 held-out test seeds, and we report the checkpoint with the highest mean test reward. DG-PG, MAPPO, and IPPO report mean $\pm$ standard deviation across training seeds, while Best-Fit and Random are deterministic baselines evaluated on the same test seeds.

\subsection{Training Settings}

DG-PG, MAPPO, and IPPO are all trained with PPO using parameter sharing across agents. Across all learned methods, we set the discount factor to $\gamma=0.99$ and the Generalized Advantage Estimation parameter to $\lambda=0.95$, and use Adam with separated actor-critic updates, independent gradient clipping, return normalization before critic updates, and advantage normalization before policy updates. 

However, MAPPO and IPPO with neural-network actor and critic models failed to learn competitive policies despite extensive tuning (Appendix~\ref{app:nn_mappo_failure}). The controlled comparison therefore uses linear actor and critic models, together with stronger stabilization settings, for all three methods. This keeps the actor and critic function classes matched within the comparison, while the guidance-weight and scalability experiments use the neural-network DG-PG configuration.

\subsubsection{Linear Models (Controlled Comparison)}

For the controlled comparison at $N \in \{2, 5, 10\}$, all three methods use the same linear actor $\pi(a|s)=\mathrm{softmax}(Ws+b)$ and linear critic $V(s)=w^\top s+b$. They also share the same core training parameters. Table~\ref{tab:linear_hyper} lists the full configuration.

\begin{table}[h]
    \centering
    \caption{Hyperparameters for controlled comparison (Section~\ref{subsec:main_results}). }
    \label{tab:linear_hyper}
    \begin{tabular}{ll}
        \toprule
        \textbf{Parameter} & \textbf{Value} \\
        \midrule
        Model class & Linear (single layer, no hidden units) \\
        Learning rate & $1 \times 10^{-3}$ (linear decay to $1\%$) \\
        PPO clipping $\epsilon$ & $0.2$ \\
        Critic epochs / Actor epochs & 20 / 3 (split-epoch) \\
        Mini-batch size & 10{,}000 \\
        Gradient clipping & max norm $10.0$ \\
        Huber loss $\delta$ & $10.0$ \\
        Parallel environments & 4 \\
        Rollouts per episode & 24 \\
        \midrule
        DG-PG entropy coefficient & $0.01$ (exponential decay to $10^{-4}$) \\
        MAPPO / IPPO entropy coefficient & $0.005$ (fixed, no decay) \\
        DG-PG training episodes & 200 \\
        MAPPO / IPPO training episodes & 500 \\
        DG-PG $\alpha$ schedule & dynamic schedule ($0.9 \to 0.2$) \\
        \bottomrule
    \end{tabular}
\end{table}

Several shared parameters were set to stabilize MAPPO and IPPO, based on the tuning documented in Section~\ref{app:nn_mappo_failure}:
\begin{itemize}
    \item \textbf{Split-epoch training} (critic 20 epochs, actor 3 epochs): MAPPO and IPPO require a well-trained critic before the actor receives meaningful advantage signals. A single linear layer needs many gradient steps to fit the value function. Without this asymmetric schedule, advantage estimates are too noisy for stable policy updates.
    \item \textbf{Large mini-batch} (10{,}000) and \textbf{high rollout count} (24): With smaller batches (e.g., 256--1{,}024), MAPPO and IPPO exhibit large reward oscillations and do not converge. Multi-agent policy gradients in this environment have high variance due to heterogeneous servers, bimodal workloads, and non-stationary arrivals, requiring large batches to average over noise.
\end{itemize}
These settings are based on the MAPPO implementation guidelines in \citep{yu2022surprising}, with additional stabilization needed for our scheduling environment. DG-PG uses the same linear architecture and the shared training settings for fair comparison.

The remaining differences also favor MAPPO/IPPO:
\begin{itemize}
    \item \textbf{Entropy}: MAPPO/IPPO use a fixed coefficient ($0.005$) rather than decay, because entropy decay caused premature policy collapse in our tuning. The fixed value was selected as the best-performing setting across grid search.
    \item \textbf{Training length}: MAPPO/IPPO are trained for 500 episodes ($2.5\times$ longer than DG-PG's 200), providing additional convergence time.
\end{itemize}

\subsubsection{Neural Network Models}

For the DG-PG experiments in Sections~\ref{subsec:alpha} and~\ref{subsec:scalability}, we use neural-network actor and critic models rather than the linear models used in the controlled comparison. Both models are two-layer feedforward networks with Tanh activations.

\paragraph{Hyperparameters.}
\begin{itemize}
    \item \textbf{Learning rate}: $1 \times 10^{-4}$ (exponential decay)
    \item \textbf{Entropy coefficient}: $0.02$ (exponential decay to $10^{-4}$)
    \item \textbf{PPO update epochs}: $K = 4$ (unified for actor and critic)
    \item \textbf{Gradient clipping}: max norm $0.5$
    \item \textbf{Guidance signal}: See Appendix \ref{app:verification_cloud}
    \item \textbf{Guidance weight $\alpha$ in Section~\ref{subsec:alpha}}: fixed $\alpha \in \{0.2,0.4,0.6,0.8\}$
    \item \textbf{Guidance weight $\alpha$ in Section~\ref{subsec:scalability}}: dynamic schedule for $N \in \{5,10,20,50\}$ and fixed $\alpha = 0.9$ for $N \ge 100$
    \item \textbf{Guidance normalization}: running guidance statistics are updated with momentum $0.99$, and the normalized guidance signal is clipped to $[-3, 3]$
\end{itemize}

\paragraph{Scale-Specific Settings.}
For the scalability experiments, we use scale-specific settings according to the input size and training cost, as summarized in Table~\ref{tab:app_hyper}.

\begin{table}[h]
    \centering
    \caption{Scale-specific hyperparameters.}
    \label{tab:app_hyper}
    \resizebox{\textwidth}{!}{%
    \begin{tabular}{ccccccc}
        \toprule
        \textbf{Servers} ($N$) & \textbf{Hidden Dimension} & \textbf{Mini-batch} & \textbf{Rollouts} & \textbf{Parallel Environments} & \textbf{PPO Clip} ($\epsilon$) & \textbf{Active Samples} \\
        \midrule
        10   & 128  & 512  & 12 & 4 & 0.2 & $\sim 120{,}000$ \\
        20   & 128  & 512  & 12 & 4 & 0.2 & $\sim 240{,}000$ \\
        50   & 256  & 1024 & 8  & 4 & 0.4 & $\sim 400{,}000$ \\
        100  & 512  & 1024 & 6  & 2 & 0.4 & $440{,}000$ (cap)\\
        200  & 1024 & 1024 & 2  & 1 & 0.4 & $440{,}000$ (cap) \\
        500  & 1024 & 1024 & 1  & 1 & 0.4 & $440{,}000$ (cap) \\
        1000 & 1024 & 1024 & 1  & 1 & 0.4 & $440{,}000$ (cap) \\
        1500 & 1024 & 1024 & 1  & 1 & 0.4 & $440{,}000$ (cap) \\
        \bottomrule
    \end{tabular}
    }
\end{table}
As the observation and action spaces grow, we increase the hidden dimension of the actor and critic. At smaller scales, each rollout contains fewer active-agent transitions, so we use smaller mini-batches, more rollouts per episode, and more parallel environment workers to improve trajectory diversity. For $N \ge 50$, we use a wider PPO clipping ratio $\epsilon=0.4$, because DG-PG tolerated more aggressive policy updates and smaller clipping ratios slowed early learning. At larger scales, the number of active samples collected per episode grows rapidly, so we cap the retained active samples used for PPO updates at approximately $440{,}000$.

\paragraph{Training Stability.}
Since the guidance signal stabilizes DG-PG by reducing gradient variance (Theorem~\ref{thm:variance}), DG-PG does not require the stabilization mechanisms used for MAPPO and IPPO, such as split-epoch training, larger mini-batches, or more rollouts. Moreover, our experiments show that DG-PG can go beyond conservative stabilization settings and tolerate more aggressive PPO updates, such as a wider clipping ratio.

\section{Baseline Failure Analysis}
\label{app:nn_mappo_failure}

MAPPO and IPPO fail to learn competitive policies with neural-network actor and critic models due to a fundamental \textit{actor-critic update frequency dilemma}: the actor and critic have conflicting requirements for gradient update frequency, and no configuration resolves both simultaneously. We verified this across $\sim$20 training runs at $N{=}5$ and $N{=}2$, varying batch size, update epochs, entropy coefficient, value normalization, gradient clipping, and model architecture. Table~\ref{tab:mappo_runs} lists representative runs at $N{=}5$ with a two-layer feedforward network (128 hidden units). Each representative run uses approximately $200{,}000$ training samples per episode.

\begin{table}[h]
\centering
\caption{Representative MAPPO runs with neural-network actor and critic models at $N{=}5$. }
\label{tab:mappo_runs}
\vspace{0.5em}
\resizebox{\textwidth}{!}{%
\setlength{\tabcolsep}{3pt}
\begin{tabular}{clcccl}
\toprule
\textbf{Run} & \textbf{Key Change} & \textbf{Mini-batch} & \textbf{PPO Epochs} & \textbf{$c_{\text{ent}}$} & \textbf{Observed Failure} \\
\midrule
0 & DG-PG defaults & 256 & 4 & 0.02 (decay) & Entropy collapse ($1.59 \to 0.26$) \\
1 & +ValueNorm & 256 & 10 & 0.01 & Entropy collapse ($1.60 \to 0.88$) \\
2 & $K$: $10 \to 4$ & 256 & 4 & 0.01 & Slow entropy collapse ($1.54 \to 1.36$) \\
3 & $c_{\text{ent}}$: $0.01 \to 0.03$ & 256 & 4 & 0.03 & Entropy bonus drowns advantage \\
4 & Align with MAPPO defaults & 50{,}000 & 10 & 0.01 & Critic too slow (explained variance $\leq 0.30$) \\
5 & Batch: $50\text{K} \to 20\text{K}$ & 20{,}000 & 15 & 0.01 & Critic still insufficient \\
6 & Batch: $20\text{K} \to 10\text{K}$ & 10{,}000 & 15 & 0.01 & No convergence \\
7 & +Potential-based shaping & 10{,}000 & 15 & 0.01 & Shaping signal negligible \\
\bottomrule
\end{tabular}%
}
\end{table}

\subsection{The Gradient Update Frequency Dilemma}

In MAPPO and IPPO training, the most sensitive quantity is the number of mini-batch gradient updates performed after each episode. When this number is large, the critic receives enough optimization steps, but the actor repeatedly updates on noisy advantages and quickly loses entropy. When this number is small, the actor remains stable, but the critic does not learn accurate value estimates. Table~\ref{tab:mappo_runs} summarizes these outcomes in three regimes:
\begin{itemize}
    \item \textbf{High update frequency} (Runs 0--3): The critic learns well (explained variance reaches 0.7), but the actor undergoes entropy collapse, converging to a near-deterministic, suboptimal action within 50--70 episodes, after which performance degrades irreversibly.
    \item \textbf{Low update frequency} (Runs 4--5): Entropy remains stable, but the critic receives too few updates to learn accurate value estimates (explained variance $\leq 0.30$), leaving the actor with uninformative advantages.
    \item \textbf{Intermediate update frequency} (Run 6): The update frequency avoids immediate entropy collapse and improves critic explained variance to 0.35--0.49, but the training curve oscillates for 1{,}000 episodes with no convergence trend.
\end{itemize}

\subsection{Attempts to Break the Gradient Update Frequency Dilemma}

We tested four structural modifications, none of which resolved the fundamental conflict:

\paragraph{ValueNorm~\citep{yu2022surprising} (Run 1).}
Value normalization improved critic fitting by increasing the critic explained variance from 0.0 to 0.7 under the high-update-frequency setting. However, the actor still suffered entropy collapse, showing that better value normalization alone does not resolve the update-frequency dilemma.

\paragraph{MAPPO implementation alignment (Run 4).}
We aligned the configuration with the MAPPO reference implementation from Yu et al.~\citep{yu2022surprising}, including the mini-batch size ($256 \to 50{,}000$), value loss coefficient ($0.5 \to 1.0$), Huber loss parameter ($1.0 \to 10.0$), gradient clipping threshold ($0.5 \to 10.0$), entropy setting, and rollout count ($12 \to 24$). This eliminated entropy collapse by reducing gradient updates from 3{,}064 to 40 per episode, but made the critic too slow.

\paragraph{Potential-based reward shaping (Run 7).}
Potential-based reward shaping with $\Phi(s) = -\text{Var}(\text{utilization})$ produced negligible shaping signals. This indicates that the dominant bottleneck is gradient variance from cross-agent credit assignment, rather than temporal credit assignment.

\paragraph{Asynchronous actor-critic ($N{=}2$).}
Separate critic-then-actor optimization at the smallest scale produced near-zero KL divergence and 0\% PPO clipping. This suggests that the policy-gradient signal remains too noisy to drive meaningful policy updates even with only 2 agents.

\subsection{Transition to Linear Models}

These failures make neural-network MAPPO and IPPO unsuitable for a controlled comparison in this environment. We therefore use matched linear actor and critic models, $\pi(a|s) = \text{softmax}(Ws+b)$ and $V(s)=w^\top s+b$, for all three methods. The linear models reduce function-approximation capacity and make critic fitting more stable, which allows the split-epoch training scheme in Section~\ref{app:hyperparams} to be effective. Even under this simplified model class, MAPPO and IPPO still require additional stabilization, including split-epoch training, large mini-batches, and elevated rollout counts. This gives the baselines a more favorable training configuration and makes the comparison a conservative test of DG-PG's guidance signal.

These failures reflect not only the cross-agent variance scaling analyzed in Section~\ref{sec:preliminary}, but also the statistical difficulty of the cloud scheduling environment.
\begin{itemize}
    \item \textit{Server heterogeneity} ($\eta_{\text{CPU}} \in [0.70, 1.08]$): The best placement decisions depend strongly on the server mix, making value estimation more scenario-dependent.
    \item \textit{Non-stationary arrivals}: Poisson processes with time-varying rates make episodes structurally different, reducing the transferability of value estimates across episodes.
    \item \textit{Bimodal workloads}: Jobs arrive with two distinct resource profiles, increasing variability in returns and advantage estimates.
\end{itemize}
Together, these factors make both value estimation and policy learning difficult, because the policy must infer a highly state-dependent placement rule from noisy multi-agent advantage estimates. DG-PG addresses this difficulty by providing an analytical descent direction during training, rather than relying solely on end-to-end discovery from returns.

\section{Computational Resources}
\label{app:compute}

All experiments were conducted on a single Apple M1 chip (8-core CPU, 16 GB unified memory, circa 2021) using CPU-only computation, without any GPU acceleration.

\paragraph{Convergence speed.}
A key practical advantage of DG-PG is its rapid convergence. We define convergence as the first episode at which reward reaches within 10\% of the best observed reward. As shown in Table~\ref{tab:convergence_time}, the average convergence episode across the reported scales from $N{=}10$ to $N{=}1500$ is 11.4. Both convergence episode and per-episode time are averaged across training seeds.

\begin{table}[h]
    \centering
    \caption{DG-PG convergence and wall-clock timing on Apple M1 (CPU-only).}
    \label{tab:convergence_time}
    \begin{tabular}{cccc}
        \toprule
        \textbf{Agents} ($N$) & \textbf{Converge Episode} & \textbf{Converge Time} & \textbf{Per-Episode} \\
        \midrule
        10   & 16.6 & $\sim$2.9 min  & $\sim$10.5 s \\
        20   & 11.6 & $\sim$3.7 min  & $\sim$18.9 s \\
        50   & 7.4  & $\sim$5.0 min  & $\sim$40.9 s \\
        100  & 6.8  & $\sim$16.2 min & $\sim$2.4 min \\
        200  & 9.0  & $\sim$37.8 min & $\sim$4.2 min \\
        500  & 12.0 & $\sim$42.1 min & $\sim$3.5 min \\
        1000 & 9.7  & $\sim$41.1 min & $\sim$4.2 min \\
        1500 & 17.7 & $\sim$99.0 min & $\sim$5.6 min \\
        \bottomrule
    \end{tabular}
\end{table}
The non-monotonic per-episode time is due to scale-specific rollout counts and the active-sample cap. For example, the $N{=}500$ runs use one rollout per episode, whereas $N{=}200$ uses two rollouts, so the larger configuration can have lower wall-clock time per episode despite the larger system size.

Overall, these results indicate that DG-PG's scalability is not only statistical but also computationally practical. Even at $N{=}1500$, the policy reaches the convergence threshold in about 99 minutes on CPU-only consumer hardware. This suggests that the analytical guidance reduces the training burden enough to make large-scale cooperative MARL feasible without specialized accelerators.


\end{document}